\definecolor{mygreen}{rgb}{0.0941 0.5804 0.0941}
\definecolor{myblue}{rgb}{0 0.8078 0.8078}
\definecolor{myred}{rgb}{0.8549 0.4863 0.4863}
\newtheorem{conjecture}{Conjecture}
\def\blockWitdh{0.9in}
\def\blockHeight{0.45in}
\def\vertDist{0.83in}
\def\columnWitdh{3.45in}
\def\trafoSep{5pt}
\tikzset{block/.style={draw, thick, font = \footnotesize, text width=\blockWitdh, minimum height=\blockHeight, align=center},   
    line/.style={-latex}, inner sep= 0,outer sep = 2pt}
\newlength{\eqboxstorage}
\newcommand{\fdl}{{f}_{\mathrm{d},k}}
\newcommand{\norm}[1]{\lVert#1\rVert}
\newcommand{\fd}{{f}_{\mathrm{d}}}
\newcommand{\dtx}{d_{\mathrm{t}}}
\newcommand{\drx}{d_{\mathrm{r}}}
\newcommand{\htx}{h_{\mathrm{t}}}
\newcommand{\hrx}{h_{\mathrm{r}}}
\newcommand{\fc}{f_{\mathrm{c}}}
\newcommand{\vtxvec}{\mathbf{v}_{\mathrm{t}}}
\newcommand{\vrxvec}{\mathbf{v}_{\mathrm{r}}}
\newcommand{\vtxx}{{v}_{\mathrm{t}x}}
\newcommand{\vrxx}{{v}_{\mathrm{r}x}}
\newcommand{\vtxy}{{v}_{\mathrm{t}y}}
\newcommand{\vrxy}{{v}_{\mathrm{r}y}}
\newcommand{\vtxz}{{v}_{\mathrm{t}z}}
\newcommand{\vrxz}{{v}_{\mathrm{r}z}}
\newcommand{\tr}{^{\mathrm{T}}}
\newcounter{MYtempeqncnt}
\acrodef{WSSUS}{wide-sense stationary uncorrelated scattering}
\acrodef{LSF}{local scattering function}
\acrodef{pdf}{probability density function}
\acrodef{PDP}{power delay profile}
\acrodef{SR}{specular reflection}
\acrodef{LOS}{line-of-sight}
\acrodef{PDF}{probability density function}
\acrodef{V2V}{vehicle-to-vehicle}
\acrodef{M2M}{mobile-to-mobile}
\acrodef{A2A}{air-to-air}
\acrodef{F2M}{fixed-to-mobile}
\acrodef{PSC}{prolate spheroidal coordinate}
\acrodef{PSCS}{prolate spheroidal coordinate system}
\acrodef{CCS}{Cartesian coordinate system}
\acrodef{TX}{transmitter}
\acrodef{RX}{receiver}
\acrodef{WSS}{wide-sense stationary}
\acrodef{US}{uncorrelated scattering}
\acrodef{CCF}{channel correlation function}
\acrodef{OTFS}{orthogonal time frequency space}
\acrodef{ISAC}{integrated sensing and communications}
\begin{document}

\IEEEoverridecommandlockouts
\IEEEpubid{\scriptsize This work is planned to be submitted to the IEEE for possible publication. Copyright may be transferred without notice, after which this version may no longer be accessible.}

\title{Mobile-to-Mobile Uncorrelated Scatter Channels}
\author{Michael~Walter, Martin~Schmidhammer, Miguel A. Bellido-Manganell, Thomas~Wiedemann, and Dmitriy~Shutin
\thanks{M. Walter, M. Schmidhammer, M. A. Bellido-Manganell, T. Wiedemann, and D. Shutin are with the German Aerospace Center (DLR), Institute of Communications and Navigation, 82234 Wessling, Germany (e-mail: \{m.walter, martin.schmidhammer, miguel.bellidomanganell, thomas.wiedemann, dmitriy.shutin\}@dlr.de).}}

\maketitle

\begin{abstract}
In this paper, we present a complete analytic probability based description of mobile-to-mobile uncorrelated scatter channels.
The correlation based description introduced by Bello and Matz is thus complemented by the presented probabilistic description leading to a common theoretical description of uncorrelated scatter channels.
Furthermore, we introduce novel two-dimensional hybrid characteristic probability density functions, which remain a probability density in one of the variables and a characteristic function in the other variable. 
Such a probability based description allows us to derive a mathematical model, in which the attenuation of the scattering components is inherently included in these two-dimensional functions.
Therefore, there is no need to determine the path loss exponent. 
Additionally, the Doppler probability density function with the inclusion of the path loss leads to a concave function of the Doppler spectrum, which is quite different from the Jakes and Doppler spectra and can be directly parameterized by the velocity vectors and geometry of the scattering plane. 
Thus, knowing those parameters permits the theoretical computation of the Doppler spectra and temporal characteristic functions.
Finally, we present a comparison between the computed probability based theoretical results and measurement data for a generic mobile-to-mobile channel.
The agreement between the two shows the usefulness of the probability based description and confirms new shapes of the Doppler power spectra.
\end{abstract}

\begin{IEEEkeywords}
Doppler frequency, mobile-to-mobile communication, geometry-based stochastic channel model, characteristic function, prolate spheroidal coordinate system, non-stationary.
\end{IEEEkeywords}

\section{Introduction}
\IEEEPARstart{T}{he} significance of \ac{M2M} communication is progressively growing. 
Especially \ac{V2V} communications \--- just one area of modern \ac{M2M} communications \--- is becoming increasingly integrated into newly manufactured cars to mitigate possible accidents via situational awareness.
Moreover, it will become an integral part of future autonomous driving vehicle solutions \cite{Darbha19}.    
Similar direct communication frameworks are envisioned for various modes of transportation such as trains, ships, aircraft, and drones. 
Historically, the channel models used for the design and testing of communication systems in the past were \ac{WSSUS}.
They were largely valid as they were mainly for fixed-to-mobile channels.
However, they are inadequate in an \ac{M2M} context due to the inherent non-stationarity and uncorrelated scattering characteristics of the channel.

Scattering is ultimately a stochastic process.
It has been shown in \cite{Rice44} and \cite{Rice45} that it can be well explained by means of postulating a power spectral density in the frequency domain, or equivalently with an autocorrelation function in the temporal domain.  
Stochastic time variations can also be described in a similar statistical fashion. \IEEEpubidadjcol
Clarke derived in \cite{Clarke68} a model that describes the Doppler spectrum of the propagation channel in the case of a mobile receiver, e.g., a moving car communicating with a fixed base station. 
This type of Doppler spectrum is widely known as Jakes spectrum \cite{jakes94}. 
In \cite{Bello63} Bello extended the mentioned models to more general channels, describing statistical characterizations for both \ac{WSSUS} and non-\ac{WSSUS} channels. 
Later, Matz in \cite{Matz05} focuses on non-\ac{WSSUS} channels, which he studies from two complementary perspectives.
First, the time-frequency channel transfer function can be treated as a non-stationary process in both time and frequency domains.
Second, by assuming that scatterers with distinct delays and Doppler frequencies are uncorrelated, the resulting channel impulse response can be studied as non-stationary in time and uncorrelated along the delay.
This allows to introduce the \ac{LSF} and the \ac{CCF}, which both describe small-scale channel statistics.
The two functions are naturally related to correlation functions introduced by Bello in \cite{Bello63} via Fourier transforms.

Bello states in Section~IV of his seminal work \cite{Bello63} that it's difficult to find an exact statistical description of a time-variant channel in terms of multidimensional probability functions and that correlation functions are a more practical approach.
In essence, this statement implies that without further physical assumptions on the propagation environment, an accurate statistical characterization of non-\ac{WSSUS} channels can be quite elusive. 
This motivates the application of geometric-stochastic modeling approaches, where a specific propagation geometry is assumed, or at least some additional assumptions on the structure of the channel are put into place.

One such approach, as exemplified well in \cite{Bernado12}, reveals that \ac{M2M} channels typically violate the \ac{WSS} assumption to a greater extent than the \ac{US} assumption. 
This, on the one hand, constrains general non-\ac{WSSUS} type of channels to more restricted cases, while on the other hand, building a highly relevant application scenario for the design of practical \ac{M2M} communication systems.
Our objective in this paper is therefore to explore such restricted cases, especially non-\ac{WSS} channels, in more detail, providing stochastic characterizations describing the non-stationary behavior of such \ac{M2M} channels.
Thus, we strive to extend our theoretical non-\ac{WSS} model to encompass other correlation domains in an attempt to create a complete stochastic description of a non-stationary \ac{M2M} channel characterization.
Therefore, we use a \ac{PSC} system introduced in \cite{Walter_AWPL14} to derive the time-dependent \acp{pdf} of delay and Doppler variables.
Additionally, we introduce hybrid characteristic and probability density functions to better capture the complex dynamics of \ac{M2M} communication channels.
These are in essence (inverse) Fourier transforms along one of the variables of the corresponding time-varying \ac{pdf}.
The newly introduced hybrid functions are proportional to the correlation functions introduced by Bello.
We thus extend the proof of the proportionality between the joint delay Doppler \ac{pdf} and scattering function for the \ac{WSSUS} case \cite{Walter_TVT17} to non-\ac{WSS} channels.
In particular, the proportionality between the joint time frequency characteristic function and the time frequency correlation function is shown. 
The time-variant time frequency correlation function shown in \cite{Gutierrez19} by invoking the \ac{US} assumption is thus proportional to the joint characteristic function presented in this paper. 
We further explain new insights into the channel structure obtained from newly introduced hybrid channel functions. 
By being able to theoretically describe high mobility \ac{M2M} channels, new modulation schemes like \ac{OTFS} \cite{Zhiqiang21}, \cite{Shuangyang21} can be employed for \ac{ISAC} \cite{Keskin24}.

The rest of the paper is structured as follows. 
In Section~II, we provide a complete stochastic description of the uncorrelated \ac{M2M} channel and relate those functions to the correlation functions, \ac{LSF}, and \ac{CCF}. 
In Section~III, we show, how the hybrid probability characteristic function is calculated in closed form. 
The theoretical stochastic description is then verified by measurement data from an air-to-air measurement campaign in Section~IV. The paper is concluded with Section~V. 

\section{Characterization of US Channels}\label{sec:CH_Characterization}
\noindent
In order to obtain analytical solutions for the probability based description, it is essential to establish a unified mathematical framework.
Our goal is to establish connections between the models developed  by Bello in \cite{Bello63} and Matz in \cite{Matz05} with our probabilistic approach.

Consider a classical communication channel between a transmitter and a receiver, both possibly mobile. 
The relationship between the transmitted signal $s(t)$ and the received signal $r(t)$ can be represented as \cite{Matz05}
\begin{IEEEeqnarray}{rCl}
    r(t)&=&\int h(t,\tau)s(t-\tau)\,\mathrm{d}\tau\,,
    \label{eq:channel_response}
\end{IEEEeqnarray}
where the function $h(t,\tau)$ \--- the time-varying channel impulse response \--- fully characterizes the propagation environment between the transmitter and the receiver. 
Essentially \eqref{eq:channel_response} states that the received signal is a superposition of differentially delayed copies of the transmitted signal.
The channel can be defined in terms of the spreading function or also known as Doppler-variant impulse response $S(\fd,\tau)=\int h(t,\tau)\mathrm{e}^{-\mathrm{j}2\pi\fd t}\,\mathrm{d}t$, the time-varying transfer function $L(t,f)=\int h(t,\tau)\mathrm{e}^{-\mathrm{j}2\pi f \tau}\,\mathrm{d}\tau$, via a Fourier transform over the delay variable $\tau$ or the Doppler-variant transfer function $T(\fd,f)=\int\int h(t,\tau)\mathrm{e}^{-\mathrm{j}2\pi\fd t}\mathrm{e}^{-\mathrm{j}2\pi f \tau}\,\mathrm{d}t\mathrm{d}\tau$, which is obtained by a double Fourier transform of $h(t,\tau)$ over the $t$ and $\tau$ variables.
Knowledge of $h(t,\tau)$, or any of the other three functions is thus instrumental for the design, simulation, or testing of practical communication systems.

\subsection{Correlation Based Description}
\noindent
In practice, an exact form of $h(t,\tau)$ depends on the particular propagation environment.
The environment, however, is rarely known accurately or in advance at the stage of communication system design. 
Therefore, as has been mentioned earlier, statistical properties of $h(t,\tau)$ are of interest.
Similarly, the statistical properties of the other three system functions $S(\fd,\tau)$, $L(t,f)$, and $T(\fd,f)$ can be determined.
These statistics can be captured by the corresponding autocorrelation functions 
of the four system functions according to \cite{Bello63, Matz05, Boashash15} as
\begin{IEEEeqnarray}{lll}
    \label{eq:channel_corr_func}
    R_h(t,\tau;\Delta t,\Delta \tau)&=\mathbb{E}\{h(t,\tau+\Delta\tau)h^*(t-\Delta t,\tau)\}\,,\\
    R_L(t,f;\Delta t,\Delta f)&=\mathbb{E}\{L(t ,f+\Delta f)L^*(t-\Delta t, f)\}\,,\nonumber\\
    R_S(\tau,\fd;\Delta \tau,\Delta \fd)&=\mathbb{E}\{S(\tau,\fd +\Delta \fd)S^*(\tau-\Delta\tau,\fd)\}\,,\nonumber\\
    R_T(\fd,f;\Delta \fd,\Delta f)&=\mathbb{E}\{T(\fd,f+ \Delta f)T^*(\fd-\Delta\fd, f)\}\,.\nonumber    
\end{IEEEeqnarray}
Here the operator $\mathbb{E}\{\cdot\}$ denotes the expectation operation.\footnote{
Due to equivalence $\mathbb{E}\{f(x,y+\Delta y)f^*(x-\Delta  x,y)\}=\mathbb{E}\{f(x,y)f^*(x-\Delta x,y-\Delta y)\}=\mathbb{E}\{f(x+\Delta x,y+\Delta y)f^*(x,y)\}$ expressions in \eqref{eq:channel_corr_func} can be modified correspondingly.}
By taking double Fourier transforms of the correlation functions, a set of four equivalent, yet different descriptions of the channel can be obtained.

One can see that the correlation functions in \eqref{eq:channel_corr_func} are 4D functions.
These functions are difficult to work with, not to mention hard to get insights into or intuition about their properties. 
To simplify the analysis, one can often invoke the \ac{WSSUS} assumption, see e.g., \cite{Bello63}.
Its consequence is that correlation functions in \eqref{eq:channel_corr_func} become dependent only on the corresponding time and frequency lag variables.
They thus collapse to  much simpler 2D functions, see \cite{Bello63} and \cite{Matz05}.    
Due to this simplification, the \ac{WSSUS} assumption has dominated channel modeling over decades, especially for non-mobile applications or \ac{F2M} channels, i.e., with a fixed base station. 
Yet, the mobility of \ac{TX} and \ac{RX} nowadays brings in the non-stationarity of the propagation environment, and thus of the channel.
In \cite{Matz05} the \ac{LSF} was introduced as an alternative second-order channel statistic to account for non-stationarities in the time and frequency domains.
The \ac{LSF} generalizes the scattering function for non-\ac{WSSUS} channels, but, as mentioned above, it is a 4D function.  
Many \ac{M2M} channels, however, in particular \ac{V2V} channels, are non-stationary, yet preserve the \ac{US} property.
This permits describing \ac{US} channels with only three variables instead of four.

According to \cite{Bello63}, \ac{US} channels were observed for troposcatter communication and moon reflections.
Recently, similar effects were also observed for \ac{M2M} channels, where the uncorrelated scattering assumption was validated with measurements, see e.g., \cite{Bernado12}. 
There the author states that the \ac{V2V} channel infringes the \ac{WSS} assumption much stronger than the \ac{US} assumption.
In the \ac{US} case the scatterers can be modeled as a continuum of uncorrelated scatterers according to \cite{Bello63}.
As a result, the general correlation functions in \eqref{eq:channel_corr_func} become independent of $\Delta \tau$, since the correlation between scatterers causing different delays vanishes.
Under the \ac{US} assumption \eqref{eq:channel_corr_func} can be represented, see also \cite[eq. (66)]{Bello63}, as\footnote{Note that equations in \eqref{eq:3Ddescriptions} imply that the right-hand side is independent of $f$ in case of the \ac{US} assumption.}  
\begin{IEEEeqnarray}{lll}
R_h(t,\tau;\Delta t,\Delta \tau)&=P_h(t;\tau,\Delta t)\mathrm{\delta}(\Delta\tau)\,,\label{eq:3Ddescriptions}\\
R_L(t,f;\Delta t,\Delta f)&=R_L(t;\Delta t,\Delta f)\,,\nonumber\\
R_S(\tau,\fd;\Delta \tau,\Delta \fd)&=P_S(\tau,\fd;\Delta \fd)\delta(\Delta\tau)\,,\nonumber\\ 
R_T(\fd,f;\Delta \fd,\Delta f)&=R_T(\fd;\Delta \fd,\Delta f)\,,\nonumber
\end{IEEEeqnarray}
where $\delta(\cdot)$ is the Dirac delta distribution and $P_h(t;\tau,\Delta t)$ and $P_S(\tau,\fd;\Delta\fd)$ are cross-power spectral densities.
Using the former, the \ac{LSF} is defined \cite{Matz03} as 
\begin{equation}
\label{eq:LSF}
\tilde{C}_{\mathbf{H}}(t;\tau,\fd)=\int P_h(t,\tau;\Delta t)\mathrm{e}^{-\mathrm{j}2\pi\Delta t \fd}\,\mathrm{d}\Delta t\,.
\end{equation}
Furthermore, in \cite{Matz05} Matz defines the corresponding channel correlation function, which we simplify here to
\begin{equation}
\label{eq:CCF}
\tilde{\mathcal{A}}_{\mathbf{H}}(\Delta t,\Delta f;\Delta \fd)=\int R_L(t;\Delta t,\Delta f)\mathrm{e}^{-\mathrm{j}2\pi\Delta\fd t}\,\mathrm{d}t\,,
\end{equation}
which brings the total available functions to describe the \ac{US} channel statistically to six, which are given in \eqref{eq:3Ddescriptions}, as well as \eqref{eq:LSF} and \eqref{eq:CCF}. 
The reduction of dimensionality of the six correlation based functions permits simpler visualization and interpretation, which in turn provide valuable insights into the correlation properties of the channel. 

Again, the resulting correlation functions can be studied both in time, time lag, or frequency and frequency lag domains.
As a consequence, for the 3D functions in \eqref{eq:3Ddescriptions} a total of eight equivalent representations can mathematically be established.
However, in \cite{Bello63} Bello discusses only the four proper correlation functions, where the time and frequency variables, as well as the associated lag variables are in the same domain, e.g. either both in the temporal or both in the frequency domain.
Later in \cite{Matz05} Matz proposed the time-variant \ac{LSF} and the channel correlation function, which are obtained by Fourier transforms of the original four correlation functions as shown in \eqref{eq:LSF} and \eqref{eq:CCF}.
As we will see, it is also useful to consider the correlation variables in a mixed temporal frequency domain, thus extending the available six functions from Bello and Matz to a total of eight. 
All these functions are summarized conceptually in  Fig.~\ref{fig:us_pic} for the probability based description.
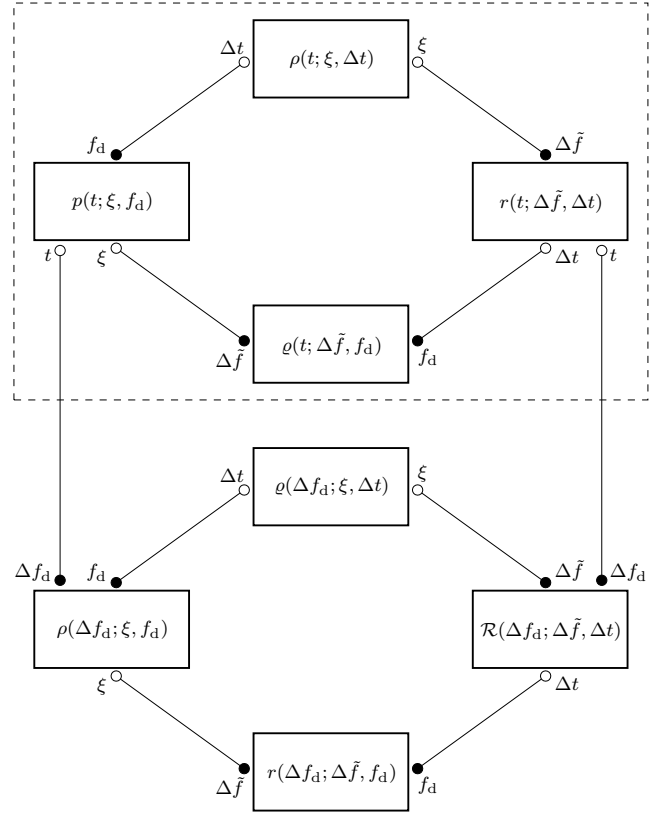
\begin{figure}[tb]
\centering
  \scalebox{0.9}{\begin{tikzpicture}  

\node[block] (Ps) at (\blockWitdh/2,0.22in) {$p(t; \xi,\fd)$};
\node (mid) at (\columnWitdh/2,0.22in) {};
\node[block] (rM) at (\columnWitdh-\blockWitdh/2,0.22in) {$r(t; \Delta \tilde{f},\Delta t)$};
\node[block] (Ph) at ($(mid)+(0,\vertDist)$) {$\rho(t; \xi,\Delta t)$};
\node[block] (P1) at ($(mid)+(0,-\vertDist)$) {$\varrho(t; \Delta \tilde{f},\fd)$};

\node (tBlock) at (mid) [draw,dashed,minimum width=\columnwidth+0.2in,minimum height=\blockHeight+2*\vertDist+0.2in] {};

\node[block] (P2) at ($(P1)+(0,-\vertDist)$) {$\varrho(\Delta\fd; \xi, \Delta t)$};
\node[block] (PV) at ($(Ps)+(0,-3*\vertDist)$) {$\rho(\Delta\fd;\xi,\fd)$};
\node[block] (rf) at ($(rM)+(0,-3*\vertDist)$) {$\mathcal{R}(\Delta\fd; \Delta \tilde{f}, \Delta t)$};
\node[block] (rH) at ($(P2)+(0,-2*\vertDist)$) {$r(\Delta\fd; \Delta \tilde{f},\fd)$};

\draw[*-o] (rH.west) -- (PV.south);
\draw[*-o] (rH.east) -- (rf.south);
\draw[*-o] (Ps.north) -- (Ph.west);
\draw[*-o] (rM.north) -- (Ph.east);

\draw[*-o] ($(PV.north)+(-.3in,0)$) -- ($(Ps.south)+(-.3in,0)$);
\draw[*-o] ($(rf.north)+(.3in,0)$) -- ($(rM.south)+(.3in,0)$);

\draw[*-o] (P1.west) -- (Ps.south);
\draw[*-o] (P1.east) -- (rM.south);
\draw[*-o] (PV.north) -- (P2.west);
\draw[*-o] (rf.north) -- (P2.east);

\node[font=\footnotesize, below left] at (PV.south) {$\xi$};
\node[font=\footnotesize, above left] at (P2.west) {$\Delta t$};
\node[font=\footnotesize, above left] at (PV.north) {$\fd$};
\node[font=\footnotesize, below left] at (rH.west) {$\Delta \tilde{f}$};
\node[font=\footnotesize, above left] at ($(PV.north)+(-.32in,0)$) {$\Delta\fd$};
\node[font=\footnotesize, below left] at ($(Ps.south)+(-.32in,0)$) {$t$};
\node[font=\footnotesize, above right] at (P2.east) {$\xi$};
\node[font=\footnotesize, below right] at (rf.south) {$\Delta t$};
\node[font=\footnotesize, above right] at ($(rf.north)+(.32in,0)$) {$\Delta\fd$};
\node[font=\footnotesize, below right] at ($(rM.south)+(.32in,0)$) {$t$};
\node[font=\footnotesize, above right] at (rM.north) {$\Delta \tilde{f}$};
\node[font=\footnotesize, below right] at (P1.east) {$\fd$};
\node[font=\footnotesize, below right] at (rM.south) {$\Delta t$};
\node[font=\footnotesize, above right] at (Ph.east) {$\xi$};
\node[font=\footnotesize, above left] at (Ph.west) {$\Delta t$};
\node[font=\footnotesize, above left] at (Ps.north) {$\fd$};
\node[font=\footnotesize, below left] at (Ps.south) {$\xi$};
\node[font=\footnotesize, below left] at (P1.west) {$\Delta \tilde{f}$};
\node[font=\footnotesize, above right] at (rf.north) {$\Delta \tilde{f}$};
\node[font=\footnotesize, below right] at (rH.east) {$\fd$};

\end{tikzpicture}  }
  \caption{Time-variant and Doppler correlated relationships between characteristic functions, probability density functions, and hybrid characteristic probability density functions for the \ac{US} channel. Time-variant functions, which are referred to in Fig.~\ref{fig:correlation_functions}, are framed by a dashed rectangle.}
  \label{fig:us_pic}
\end{figure}

In order to set all correlation based functions in a relationship with the probability based functions discussed later, we need to make a new definition.
Thus, we define 
\begin{equation}
P_{\varrho}(t;\Delta f,\fd)\triangleq\int\tilde{C}_{\mathbf{H}}(t;\tau,\fd)\mathrm{e}^{-\mathrm{j}2\pi\Delta f\tau}\,\mathrm{d}\tau\,,
\end{equation}
as the Fourier transform of the time-variant \ac{LSF} with respect to the delay variable $\tau$.
Similarly to \cite{paetzold2002mobile} and to ease further direct comparison with probability based descriptions defined in the next section, we define normalized versions of the correlation based functions as
\begin{IEEEeqnarray}{lll}
\label{eq:normalized_correlations}
\tilde{C}_{\mathbf{H}}(t;\fd|\tau)&\triangleq\frac{\tilde{C}_{\mathbf{H}}(t;\tau,\fd)}{P_h(t;\tau,\Delta t=0)}\,,\\
P_h(t;\Delta t|\tau)&\triangleq\frac{P_h(t;\tau,\Delta t)}{P_h(t;\tau,\Delta t=0)}\,,\nonumber\\
P_{\varrho}(t;\Delta f|\fd)&\triangleq\frac{P_{\varrho}(t;\Delta f,\fd)}{P_{\varrho}(t;\Delta f=0,\fd)}\,,\nonumber\\
\tilde{R}_L(t;\Delta t,\Delta f)&\triangleq\frac{R_L(t;\Delta t,\Delta f)}
{R_L(t;\Delta t=0,\Delta f=0)}\,,\nonumber\\
\tilde{P}_h(t;\tau)&\triangleq\frac{P_h(t;\tau,\Delta t=0)}{\int P_h(t;\tau,\Delta t=0)\,\mathrm{d}\tau}\,,\nonumber\\
\tilde{P}_{\varrho}(t;\fd)&\triangleq\frac{P_{\varrho}(t;\Delta f=0,\fd)}{\int P_{\varrho}(t;\Delta f=0,\fd)\,\mathrm{d}\fd}\,.\nonumber
\end{IEEEeqnarray}

\subsection{Probability Based Description}
\label{sec:Prob_based}
\noindent
In this subsection, we discuss the probabilistic representation of the \ac{US} channel more formally.
We begin by defining $\xi\triangleq\tau/\tau_{\mathrm{los}}$ as the normalized delay, where $\tau_{\mathrm{los}}$ is the \ac{LOS} delay between the transmitter and receiver.
Similarly, we define the normalized frequency lag $\Delta \tilde{f}\triangleq\Delta f\tau_{\mathrm{los}}$.
In the following, we will largely utilize the notation of our previous works, e.g.,  \cite{Walter_TVT21}, where the focus was on deriving the joint delay Doppler \ac{pdf} $p(t;\xi,\fd)$ for the \ac{M2M} channel.

Note that by taking Fourier transforms along some of the variables of these functions, we obtain equivalent representations, yet in the corresponding frequency domains.
This will allow for other, often quite revealing, interpretations of the channel properties, as will be shown later.    
These different transforms are shown for the \ac{US} case in Fig.~\ref{fig:us_pic}. 
Of particular interest is the time-varying joint delay Doppler \ac{pdf} $p(t;\xi,\fd)$ that we introduced before in \cite{Walter_TVT21}.
The time-varying joint delay Doppler \ac{pdf} $p(t;\xi,\fd)$ can be computed analytically, e.g., for \ac{M2M} \ac{US} channels.
Furthermore, it has a finite support, since the velocity and sensitivity limit the possible delays and Doppler frequencies.
Since the trajectories of \ac{TX} and \ac{RX} are time-variant, it makes sense to use a time-variant probability density.
In our case the time $t$ is a deterministic variable, whereas the delay $\xi$ and Doppler frequency $\fd$ are treated as stochastic variables.
Their distributions are obtained from the assumption that scatterers are uniformly distributed on the ground.
This is followed by a variable transform from the spatial to the Doppler domain, see \cite{Walter_TVT21}.
To illustrate the connection between the correlation functions from \cite{Bello63} and their equivalent forms in \cite{Matz05}, we summarize them in the Table~\ref{tab:variable_names}, using the original notation.\footnote{Note that the variable $\xi$ is defined in \cite{Bello63} as time difference, whereas in this paper it refers to a normalized delay.} 

Our objective is to relate the joint delay Doppler \ac{pdf} $p(t;\xi,\fd)$ to the \acs{LSF} \cite{Matz05} and, what we call, the hybrid characteristic \ac{pdf} $\rho(t;\xi,\Delta t)$ to the original correlation function of Bello \cite{Bello63}.
Recall that for the joint \acp{pdf} this is equivalent to computing the characteristic function for one of the two variables.
Thus, $\rho(t;\xi,\Delta t)$ is a characteristic function along $\Delta t$ direction and a \ac{pdf} function along the delay direction.
Accordingly, this function presents a time-variant spectral density along the $\xi$ variable and the temporal correlation along the $\Delta t$ variable. 
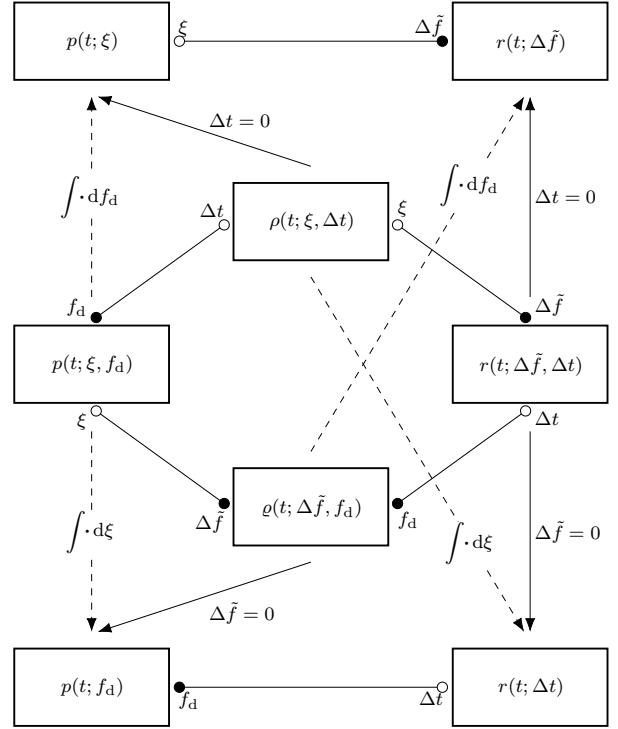
\begin{figure}[tb]
\centering
  \scalebox{0.9}{\begin{tikzpicture}  
\node (center) at (0,0) {};
\node[block] (STT) at ($(center)+(0,-\vertDist)$) {$\varrho(t; \Delta \tilde{f},\fd)$};
\node[block] (Sss) at ($(center)+(-\columnWitdh/2+\blockWitdh/2,0)$) {$p(t; \xi,\fd)$};
\node[block] (SHH) at ($(center)+(\columnWitdh/2-\blockWitdh/2,0)$) {$r(t; \Delta \tilde{f}, \Delta t)$};
\node[block] (Shh) at ($(center)+(0,\vertDist)$) {$\rho(t; \xi, \Delta t)$};

\node[block] (Pxi) at ($(Sss)+(0,2*\vertDist+0.22in)$) {$p(t; \xi)$};
\node[block] (Pfd) at ($(Sss)+(0,-2*\vertDist-0.22in)$) {$p(t; \fd)$};
\node[block] (Stt) at ($(SHH)+(0,-2*\vertDist-0.22in)$) {$r(t; \Delta t)$};
\node[block] (Sff) at ($(SHH)+(0,2*\vertDist+0.22in)$) {$r(t; \Delta \tilde{f})$};

\draw[*-o] (STT.west) -- (Sss.south);
\draw[*-o] (STT.east) -- (SHH.south);
\draw[*-o] (SHH.north) -- (Shh.east);
\draw[*-o] (Sss.north) -- (Shh.west);

\draw[*-o] (Sff.west) -- (Pxi.east);
\draw[*-o] (Pfd.east) -- (Stt.west);

\node[font=\footnotesize, below left] at (STT.west) {$\Delta \tilde{f}$};
\node[font=\footnotesize, below left] at (Sss.south) {$\xi$};
\node[font=\footnotesize, below right] at (STT.east) {$\fd$};
\node[font=\footnotesize, below right] at (SHH.south) {$\Delta t$};

\node[font=\footnotesize, above left] at (Sss.north) {$\fd$};
\node[font=\footnotesize, above left] at (Shh.west) {$\Delta t$};
\node[font=\footnotesize, above right] at (SHH.north) {$\Delta \tilde{f}$};
\node[font=\footnotesize, above right] at (Shh.east) {$\xi$};
\node[font=\footnotesize, above left] at (Sff.west) {$\Delta \tilde{f}$};
\node[font=\footnotesize, above right] at (Pxi.east) {$\xi$};
\node[font=\footnotesize, below right] at (Pfd.east) {$\fd$};
\node[font=\footnotesize, below left] at (Stt.west) {$\Delta t$};

\draw[-{Latex[scale=1.2]},dashed] ($(Sss.south) + (0,-2*\trafoSep)$) -- ($(Pfd.north) + (0,\trafoSep)$)node [midway,fill=white] {\footnotesize $\displaystyle\int \hspace{-2.0pt} \boldsymbol{\cdot} \hspace{1.0pt} \mathrm{d}\xi\,$};
\draw[-{Latex[scale=1.2]}] ($(STT.south) + (0,-\trafoSep)$) -- ($(Pfd.north) + (0.5*\trafoSep,\trafoSep)$)node [midway,below right] {\footnotesize $\Delta \tilde{f}=0$};
\draw[-{Latex[scale=1.2]}] ($(SHH.south) + (0,-2*\trafoSep)$) -- ($(Stt.north) + (0,\trafoSep)$)node [midway,right] {\footnotesize $\Delta \tilde{f}=0$};
\draw[-{Latex[scale=1.2]}] ($(SHH.north) + (0,2*\trafoSep)$) -- ($(Sff.south) + (0,-\trafoSep)$)node [midway,right] {\footnotesize $\Delta t=0$};
\draw[-{Latex[scale=1.2]}] ($(Shh.north) + (0,\trafoSep)$) -- ($(Pxi.south) + (0.5*\trafoSep,-\trafoSep)$)node [midway,above right] {\footnotesize $\Delta t=0$};
\draw[-{Latex[scale=1.2]},dashed] ($(Sss.north) + (0,2*\trafoSep)$) -- ($(Pxi.south) + (0,-\trafoSep)$)node [midway,fill=white] {\footnotesize $\displaystyle\int \hspace{-2.0pt} \boldsymbol{\cdot} \hspace{1.0pt} \mathrm{d}\fd\,$};

\draw[-{Latex[scale=1.2]},dashed] ($(STT.north) + (0,\trafoSep)$) -- ($(Sff.south) + (-0.5*\trafoSep,-\trafoSep)$)node [near end, fill=white] {\footnotesize $\displaystyle\int \hspace{-2.0pt} \boldsymbol{\cdot} \hspace{1.0pt} \mathrm{d}\fd\,$};
\draw[-{Latex[scale=1.2]},dashed] ($(Shh.south) + (0,-\trafoSep)$) -- ($(Stt.north) + (-0.5*\trafoSep,\trafoSep)$)node [near end,fill=white] {\footnotesize $\displaystyle\int \hspace{-2.0pt} \boldsymbol{\cdot} \hspace{1.0pt} \mathrm{d}\xi\,$};
\end{tikzpicture}  }
\caption{Time-variant relationships between characteristic functions, probability density functions, and hybrid probability density characteristic function for the \ac{US} channel.}  \label{fig:correlation_functions}
\end{figure}
Since \ac{M2M} channels are non-stationary, this function is particularly useful, since it allows observing a time-varying,  delay-dependent temporal correlation of the channel. 
Using Fourier transforms to convert the time or frequency variables will lead to other hybrid representations of the channel, where the function represents a characteristic function along one of the variables and a probability density along the other.
Thus, we consider the Fourier and inverse Fourier transform as basically the same operation, but with different signs. 
In Fig.~\ref{fig:us_pic} this is represented with functions, e.g., $\rho(t;\xi,\Delta t)$ and $\varrho(t;\Delta \tilde{f},\fd)$, which are neither pure characteristic nor probability density function representations.

For an easier comparison to \ac{WSSUS} channels, the eight functions in Fig.~\ref{fig:us_pic} can be partitioned into (i) a time-variant  (dashed box) and (ii) a Doppler correlated description. 
We consider time-variant descriptions as more natural and easier to interpret, although all descriptions enjoy an equivalence under the appropriate Fourier transform.
The reasoning behind this lies in the fact that in \ac{M2M} scenarios both transmitter and receiver move, resulting in time-dependent velocity vectors.
Thus, a time-variant channel description would be more natural.
Yet, via appropriate Fourier transform we can equivalently obtain Doppler correlated descriptions \--- the lower part of Fig.~\ref{fig:us_pic}.
These are, however, less intuitive to interpret.

\subsubsection{Time-variant Functions}
We begin with the known time-varying joint delay Doppler \ac{pdf} $p(t;\xi,\fd)$ from \cite{Walter_TVT21} as a starting point for further analysis.
An inverse Fourier transform along the variables of delay leads to a hybrid characteristic \ac{pdf} representation, as we mentioned above. 
Let us consider the joint delay Doppler \ac{pdf} $p(t;\xi,\fd)$ and the corresponding hybrid representation 
\begin{equation}    
    \rho(t;\xi,\Delta t)\triangleq\int p(t;\xi,\fd)\mathrm{e}^{\mathrm{j}2\pi\fd\Delta t}\,\mathrm {d}\fd\,.
    \label{eq:hybridtimedelay}
\end{equation}
It is important to note that $\rho(t;\xi,\Delta t=0)=p(t;\xi)$, since \eqref{eq:hybridtimedelay} becomes a marginalization integral then.
Indeed, by setting the characteristic variable $\Delta t$ to zero, the exponential function in the integral vanishes, and the marginal, time-varying \ac{pdf} $p(t;\xi)$ can be obtained.
This time-variant delay \ac{pdf} is proportional to $P_h(t;\xi,\Delta t=0)$, which is the power delay profile of the channel, see Conjecture \ref{conjecture}.
Thus, we have a non-parametric, geometry-based, time-variant path loss model, which we will analytically derive for the general \ac{M2M} channel. 
With the time-variant delay \ac{pdf} $p(t;\xi)$ we can obtain the factorization of the time-varying, delay-dependent Doppler probability density as $p(t;\xi,\fd)=p(t;\xi)p(t;\fd|\xi)$, which 
reveals the conditional density $p(t;\fd|\xi)$, i.e., the Doppler \ac{pdf} conditioned on a particular delay $\xi$.

With the time-varying \ac{pdf} $p(t;\xi)$ the hybrid characteristic \ac{pdf} can also be factorized to a delay-dependent characteristic function as 
\begin{equation}    
\rho(t;\Delta t|\xi)=\frac{\rho(t;\xi, \Delta t)}{\rho(t;\xi,\Delta t=0)} =\frac{\rho(t;\xi, \Delta t)}{p(t;\xi)}\,.
\end{equation}

The other hybrid characteristic \ac{pdf} contains a \ac{pdf} in the Doppler domain and a characteristic function in the frequency domain, but here the time and Doppler frequency are in different domains. 
The function is given by
\begin{equation}    
    \varrho(t;\Delta \tilde{f}, \fd)\triangleq\int p(t;\xi,\fd)\mathrm{e}^{-\mathrm{j}2\pi\Delta \tilde{f}\xi}\,\mathrm {d}\xi\,.
    \label{eq:hybriddoppler}
\end{equation}
Similarly, by setting $\Delta \tilde{f}=0$ in \eqref{eq:hybriddoppler}, we obtain $\varrho(t;\Delta \tilde{f}=0,\fd)=p(t;\fd)$, i.e., the time-varying Doppler probability density.
The conditional characteristic function is obtained by
\begin{equation}    
    \varrho(t;\Delta \tilde{f}|\fd)=\frac{\varrho(t;\Delta \tilde{f}, \fd)}{\varrho(t;\Delta \tilde{f}=0,\fd)}=\frac{\varrho(t;\Delta \tilde{f}, \fd)}{p(t;\fd)}\,.
\end{equation}
The time-variant joint characteristic function $r(t; \Delta \tilde{f}, \Delta t)$ can be directly obtained by a double Fourier transform as
\begin{equation} 
\label{eq:joint_char}
    r(t;\Delta \tilde{f}, \Delta t)=\iint p(t;\xi,\fd) \mathrm{e}^{-\mathrm{j}2\pi\left(\Delta \tilde{f}\xi-\fd\Delta t\right)}\,\mathrm {d}\xi\mathrm {d} \fd\,,
\end{equation}
with the property $r(t,\Delta \tilde{f}=0,\Delta t=0)=1$.
Note that in \eqref{eq:joint_char} we use one normal and one inverse Fourier transform to compute the joint characteristic function to be consistent with the channel modeling literature.
This is similar to the description used in\cite{Cohen89}.

\begin{table*}[tb]
	\begin{center}
		\caption{Comparison of correlation and probability based functions for US channels.}
		\label{tab:variable_names}
        \renewcommand{\arraystretch}{1.3}
		\begin{tabular}{|l|l|l||l|l|} \hline
	   \rowcolor{lightgray}\textbf{Function Name} & \textbf{Bello \cite{Bello63}} & \textbf{Matz \cite{Matz05},\cite{Matz03}}  & \textbf{Walter et.al.} & 
        \textbf{Function Name}\\ \hline
        time-varying scattering function & $-$ & $\tilde{\mathcal{C}}_{\mathbf{H}}(t;\tau,\nu)$ & $ p(t;\xi,\fd)$& TV joint delay Doppler \ac{pdf}\\
        \hline
        time correlated delay cross-power spectral density & $P_{{g}}(t,
        s;\xi)$ & $P_h(t,\tau;\Delta t)$ & $\rho(t;\xi,\Delta t)$& TV hybrid time delay char. \ac{pdf}\\
        \hline
        $-$& $-$ & $-$ &  $\varrho(t;\Delta \tilde{f},\fd)$& TV hybrid freq. Doppler char. pdf\\
        \hline
        autocorrelation of time-variant transfer function & $R_{{T}}(\Omega;t,s)$ & $R_L(t;\Delta t,\Delta f)$ & $r(t;\Delta \tilde{f},\Delta t)$&TV joint time frequency char. fun.\\
        \hhline{|=|=|=#=|=|}
        DC delay cross-power spectral density & $P_{{U}}(\xi;\nu,\mu)$ & $P_S(\tau,\nu;\Delta \nu)$ & $\rho(\Delta\fd;\xi,\fd)$& DC hybrid Doppler delay char. pdf\\
        \hline
        autocorrelation of output Doppler spread function & $R_{{G}}(\Omega,\nu,\mu)$ & $-$ & $r(\Delta\fd;\Delta \tilde{f},\fd)$& DC frequency Doppler char. fun.\\
        \hline
        $-$& $-$ & $-$ & $\varrho(\Delta \fd; \xi,\Delta t)$& DC hybrid time delay char. pdf\\
        \hline
        uncorrelated scatter channel correlation function & $-$ & $\tilde{\mathcal{A}}_{\mathbf{H}}(\Delta t;\Delta f,\Delta \nu)$ & $\mathcal{R}(\Delta \fd;\Delta \tilde{f},\Delta t)$& DC time frequency char. fun.\\
        \hline
        \end{tabular}  
        \renewcommand{\arraystretch}{1}
        \hfill\vspace{1ex}
        TV: time-variant, DC: Doppler correlated
	\end{center}
\end{table*}

\subsubsection{Time-variant First and Second Order Moments}
The time-variant mean delay and delay spread can be easily calculated by using the hybrid characteristic \ac{pdf} and setting $\Delta t=0$. 
This results in the first two delay moments as 
\begin{IEEEeqnarray}{lll}
\label{eq:mu_xi2}
\mu_{\xi}(t)=\int\limits_{\xi_{\mathrm{min}}}^{\xi_{\mathrm{max}}}\xi\rho(t;\xi,\Delta t=0)\,\mathrm{d}\xi\,,\\
\label{eq:sigma_xi2}
\sigma_{\xi}(t)=
\sqrt{\int\limits_{\xi_{\mathrm{min}}}^{\xi_{\mathrm{max}}}\left(\xi-\mu_{\xi}(t)\right)^2\rho(t;\xi,\Delta t=0)\,\mathrm{d}\xi}\,
\end{IEEEeqnarray}
with $\xi_{\mathrm{max}}>\xi_{\mathrm{min}}>\xi_{\mathrm{sr}}$ and 
\begin{equation}
\label{eq:xi_sr2}
\xi_{\mathrm{sr}} = \max\left(\sqrt{\frac{A^2 + B^2 +D^2}{A^2 + B^2 + C^2}},1\right)\,,
\end{equation}
being the delay of the specular reflection relative to the line-of-sight delay. 
Parameters $A$, $B$, $C$ and $D$ are the orientation coefficients of the scattering plane. 
Their geometric interpretation and computation will be discussed Section~\ref{sec:HybricFunctions}. 

If the influence of the delay is removed, we obtain the temporal correlation of the channel per delay.
Instead of calculating the total mean Doppler and Doppler spread, we calculate delay-dependent mean Doppler and Doppler spread. 
These can again be obtained from hybrid characteristic \ac{pdf} as
\begin{IEEEeqnarray}{rll}
\label{eq:mu_fd}
\mu_{\fd|\xi}(t&)=\left.\frac{1}{\mathrm{j}2\pi}\frac{\partial}{\partial \Delta t}{\rho}(t;\Delta t|\xi)\right|_{\Delta t=0}\,,\\
\label{eq:sigma_fd}
\sigma_{\fd|\xi}(t&)=\\
\frac{1}{2\pi}&\left.\sqrt{\left(\frac{\partial}{\partial \Delta t}{\rho}(t;\Delta t|\xi)\right)^2-\frac{\partial^2}{\partial \Delta t^2}{\rho}(t;\Delta t|\xi)}\right|_{\Delta t=0}\nonumber\,,
\end{IEEEeqnarray}
where $\rho(t;\Delta t|\xi)$ is the inverse Fourier transform of $p(t;\fd|\xi)$ and is therefore a characteristic function in the $\Delta t$ variable.

\subsubsection{Proportionality between the Correlation and Probability Based Functions}
The following proposition states that stochastic channel descriptions computed based on the joint delay Doppler \ac{pdf}, as shown in Fig.~\ref{fig:us_pic}, are proportional to the corresponding correlation based functions derived by Bello and Matz in their works. 
Note that the same variables are used. 
\begin{conjecture}\label{conjecture}
The autocorrelation of the time-variant transfer function $R_L(t;\Delta t, \Delta \tilde{f})$ is proportional to the time-variant joint time frequency characteristic function $r(t;\Delta \tilde{f},\Delta t)$
    $$R_L(t;\Delta t, \Delta \tilde{f})\propto r(t;\Delta \tilde{f},\Delta t)\,.$$  
Furthermore, due to the linearity of Fourier transforms, the above stated proportionality applies for each pair of correlation and probability based functions as outlined in Table~\ref{tab:variable_names}.
\end{conjecture}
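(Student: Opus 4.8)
The plan is to reduce the entire table of proportionalities to a single \emph{anchor} identity and then transport it through the common web of Fourier transforms that links the two families of functions. The anchor I would establish first is that the time-varying joint delay Doppler \ac{pdf} is proportional to the time-variant \ac{LSF}, namely $\tilde{C}_{\mathbf{H}}(t;\tau,\fd)=c(t)\,p(t;\xi,\fd)$ with $\xi=\tau/\tau_{\mathrm{los}}$ and $c(t)=\iint\tilde{C}_{\mathbf{H}}(t;\tau,\fd)\,\mathrm{d}\tau\,\mathrm{d}\fd$ the mean received power at time $t$. Under the \ac{US} assumption the second-order statistics of the spreading function collapse to $R_S(\tau,\fd;\Delta\tau,\Delta\fd)=P_S(\tau,\fd;\Delta\fd)\delta(\Delta\tau)$ as in \eqref{eq:3Ddescriptions}, so that $P_S(\tau,\fd;\Delta\fd=0)$ --- equivalently the \ac{LSF} of \eqref{eq:LSF} at the lag conjugate to $\fd$ --- is a genuine nonnegative power density over the $(\tau,\fd)$-plane. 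Because the underlying geometric model places scatterers uniformly on the ground and already carries the distance-dependent attenuation into the induced density, this power density coincides, up to the scalar $c(t)$, with the probability that a scatterer produces delay $\tau$ and Doppler $\fd$ as observed at time $t$; this is precisely the \ac{WSSUS} statement of \cite{Walter_TVT17}, and the non-\ac{WSS} extension amounts to checking that the \ac{PSC}-based construction of $p(t;\xi,\fd)$ in \cite{Walter_TVT21} admits time-dependent velocity vectors and scattering-plane orientation while leaving the \ac{US} factorization \eqref{eq:3Ddescriptions} valid pointwise in $t$.

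Next I would observe that the correlation-based functions of Bello and Matz collected in Table~\ref{tab:variable_names} and the probability-based functions of Fig.~\ref{fig:us_pic} are joined by \emph{the same} diagram of (inverse) Fourier transforms acting on the corresponding variables: $P_h(t;\tau,\Delta t)$ is the inverse transform of $\tilde{C}_{\mathbf{H}}(t;\tau,\fd)$ over $\fd$ (the inverse of \eqref{eq:LSF}), just as $\rho(t;\xi,\Delta t)$ is that of $p(t;\xi,\fd)$ by \eqref{eq:hybridtimedelay}; $P_{\varrho}(t;\Delta f,\fd)$ is the transform of $\tilde{C}_{\mathbf{H}}$ over $\tau$, just as $\varrho(t;\Delta\tilde{f},\fd)$ is that of $p$ over $\xi$ by \eqref{eq:hybriddoppler}; and $R_L(t;\Delta t,\Delta\tilde{f})$ is the double transform of $\tilde{C}_{\mathbf{H}}$ that mirrors \eqref{eq:joint_char}. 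The only bookkeeping is the change of variable $\tau=\xi\tau_{\mathrm{los}}$, which contributes a Jacobian $\tau_{\mathrm{los}}$ and simultaneously turns $\Delta f\tau$ into $\Delta\tilde{f}\xi$ since $\Delta\tilde{f}=\Delta f\tau_{\mathrm{los}}$, together with a check that the sign conventions in \eqref{eq:hybridtimedelay}, \eqref{eq:hybriddoppler}, \eqref{eq:joint_char} match those implicit in \eqref{eq:LSF} and in the definition of $P_{\varrho}$. Since the Fourier transform is linear, $g_1=c\,g_2$ implies $\mathcal{F}[g_1]=c\,\mathcal{F}[g_2]$, so feeding the anchor through this diagram yields $P_h(t;\tau,\Delta t)=c(t)\,\rho(t;\xi,\Delta t)$, $P_{\varrho}(t;\Delta f,\fd)=\tau_{\mathrm{los}}c(t)\,\varrho(t;\Delta\tilde{f},\fd)$, and in particular $R_L(t;\Delta t,\Delta\tilde{f})=\tau_{\mathrm{los}}c(t)\,r(t;\Delta\tilde{f},\Delta t)$, which is the asserted proportionality. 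The constant depends only on $t$ and on $\tau_{\mathrm{los}}$, not on any lag or running variable, which makes precise the sense of $\propto$; note also that it cancels in the normalized functions of \eqref{eq:normalized_correlations}, so there the two descriptions become equal.

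Finally, the lower, Doppler-correlated half of Table~\ref{tab:variable_names} follows from the time-variant half by one further Fourier transform over $t\mapsto\Delta\fd$: $\tilde{\mathcal{A}}_{\mathbf{H}}(\Delta t,\Delta f;\Delta\fd)=\int R_L(t;\Delta t,\Delta f)\mathrm{e}^{-\mathrm{j}2\pi\Delta\fd t}\,\mathrm{d}t$ by \eqref{eq:CCF}, while $\mathcal{R}(\Delta\fd;\Delta\tilde{f},\Delta t)=\int r(t;\Delta\tilde{f},\Delta t)\mathrm{e}^{-\mathrm{j}2\pi\Delta\fd t}\,\mathrm{d}t$, and similarly for $\rho(\Delta\fd;\xi,\fd)$, $\varrho(\Delta\fd;\xi,\Delta t)$ and $r(\Delta\fd;\Delta\tilde{f},\fd)$; linearity once more transports the proportionality, now with a constant independent of $t$ but possibly depending on $\Delta\fd$. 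I expect the anchor to be the genuine obstacle: the propagation steps are essentially a diagram chase plus sign and Jacobian bookkeeping, whereas the anchor must argue that a \emph{second}-order channel statistic --- an expectation of products of system functions --- equals, up to a scalar, a \emph{first}-order geometric probability density, and it is exactly there that the \ac{US} assumption, the uniform-scatterer geometry with built-in attenuation, and the careful non-\ac{WSS} generalization of \cite{Walter_TVT17} have to be invoked and made rigorous.
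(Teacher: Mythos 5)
Your architecture (prove one anchor proportionality, then transport it through the web of Fourier transforms by linearity) matches the structure of the claim, and your transport and change-of-variables bookkeeping ($\tau=\xi\tau_{\mathrm{los}}$, $\Delta\tilde f=\Delta f\tau_{\mathrm{los}}$, the extra transform $t\mapsto\Delta\fd$ for the Doppler-correlated half) is sound. The genuine gap is the anchor itself, which you identify as ``the genuine obstacle'' but never actually establish: you assert that under the \ac{US} assumption the \ac{LSF} is a nonnegative power density and that, because the geometric model builds the attenuation into the scatterer density, this power density ``coincides, up to the scalar $c(t)$,'' with $p(t;\xi,\fd)$, deferring the rest to the \ac{WSSUS} result of \cite{Walter_TVT17} and to an unspecified non-\ac{WSS} extension. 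That is precisely the second-order-equals-first-order step that requires an argument, and the paper supplies one at the other corner of the diagram: it writes $h(t,\tau)$ as a discrete sum of $K(t)$ paths with complex weights $\alpha_k(t)$, linearizes each $\tau_k(t)$ locally (MacLaurin expansion over the lag interval), invokes the narrowband definition $\fd=-\fc\,\mathrm{d}\tau/\mathrm{d}t$, and then computes $\mathbb{E}\{L(t,f+\Delta f)L^*(t-\Delta t,f)\}$ directly. Under uncorrelated scattering the cross terms between paths vanish, and with identically distributed path weights of common mean power the correlation collapses to $|\tilde\alpha(t)|^2\,\mathbb{E}\{\mathrm{e}^{\mathrm{j}2\pi\fdl(t)\Delta t}\mathrm{e}^{-\mathrm{j}2\pi\Delta f\tau_{\mathrm{los}}(t)\xi_k(t)}\}$, where the expectation is taken with respect to $p(t;\xi,\fd)$; by definition this expectation \emph{is} the joint characteristic function \eqref{eq:joint_char}, so $R_L\propto r$ with constant $|\tilde\alpha(t)|^2$, valid for fixed $t=t'$ and $\Delta t\to 0$.

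Note what this mechanism makes explicit and your sketch glosses over: (i) the cross-term cancellation (the \ac{US}/uncorrelated-weights assumption) is what turns an expectation of a product of sums into a single expectation; (ii) a \emph{single} $t$-dependent constant factors out only because the weights are assumed identically distributed independently of $(\xi,\fd)$ --- which is tenable precisely because the path-loss weighting $w(\xi,\eta)$ of \eqref{eq:DelayPDF} has already been folded into $p(t;\xi,\fd)$; and (iii) the proportionality is local in $t$ (piece-wise constant $\tilde\tau_k$, $\fdl$) rather than global. Without some version of this argument --- or an equally explicit derivation of your proposed anchor $\tilde{C}_{\mathbf{H}}(t;\tau,\fd)=c(t)\,p(t;\xi,\fd)$ from a channel representation --- the proposal reduces the conjecture to an equivalent unproven statement plus a diagram chase, so the substantive content of the proof is missing.
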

\begin{proof}  
We follow here the steps similar to those in \cite{Walter_TVT17}.
The starting point is the assumption that for the time $t$ the channel can be represented as a linear combination of $K(t)$ propagation paths
\begin{equation}
    h(t,\tau) = \sum_{k=0}^{K(t)-1}\alpha_k(t)\mathrm{e}^{-\mathrm{j}2\pi f_c\tau_k(t)}\delta(t-\tau_k(t))\,.
\end{equation}
Here $\alpha_k(t)$ is the complex path weight, $\tau_k(t)$ is the time-varying propagation delay, and $f_c$ is the carrier frequency.
By taking the Fourier transform over the delay, a time-variant transfer function can be constructed as
\begin{equation}    
L(t,f)=\sum_{k=0}^{K(t)-1}\alpha_k(t)\mathrm{e}^{-\mathrm{j}2\pi (f+f_c)\tau_k(t)}\,.
\end{equation}
We approximate the time-varying channel with a piece-wise linear approximation. 
This is done by assuming that for a moment of time $t=t'$ the time-dependent path propagation delay can be locally, over the interval $\Delta t$, approximated with a MacLauren series. 
Thus, we can represent a time-varying delay as  
\begin{equation}
\label{eq:DelayTaylor}
\tau_k(t)=\sum_{n=0}^{\infty}\left.\frac{1}{n!}\frac{\mathrm{d}\tau_k(t)}{\mathrm{d}t}\right|_{t=t'}t^n
\approx\,\tilde{\tau}_k(t')+t\left.\frac{\mathrm{d}\tau_k(t)}{\mathrm{d} t}\right\vert_{t=t'}\,.
\end{equation}
Note that in general, $\tilde{\tau}_k(t')$ is constant over the assumed interval $\Delta t$.
The derivative $\left.\frac{\mathrm{d}\tau_k(t)}{\mathrm{d} t}\right\vert_{t=t'}$ is also constant over this interval.
These parameters characterize the intercept and local linear trend of the time-varying delay $\tau_k(t)$ at a point $t'$.
They do, however, change with time $t$, yet at a lower rate.
In other words, they are piece-wise constant functions of $t$.
In the following, we will make the dependency of these variables on $t$ explicit, keeping the piece-wise constant nature of these variables in mind.

We note that under the narrow-band assumption, the Doppler frequency $\fd(t)$ can be defined as $\fd(t)\triangleq -\fc\, \frac{\mathrm{d}\tau(t)}{\mathrm{d} t}$, i.e., when all transmitted frequencies experience the same Doppler shift \cite{Gutierrez19}. 
Using \eqref{eq:DelayTaylor} the time-varying transfer function $L(t,f)$ can be approximated as   
\begin{IEEEeqnarray}{lll}
L(t,f)\approx 
\sum_{k=0}^{K(t)-1}\tilde{\alpha}_l(t)\mathrm{e}^{\mathrm{j}2\pi\fdl(t) t}\mathrm{e}^{-\mathrm{j}2\pi f\tilde{\tau}_k(t)}\,,
\end{IEEEeqnarray}
where $ \tilde{\alpha}_k(t)=\alpha_k(t)\mathrm{e}^{-\mathrm{j}2\pi f_c\tilde{\tau}_k(t)}$.
For the correlation function, we get with $\xi=\tau/\tau_{\mathrm{los}}$ as shown in \cite{Walter_TVT17} and \cite{Gutierrez19} the following
\begin{IEEEeqnarray}{lll}
R_L(t,\Delta t,\Delta f\tau_{\mathrm{los}}&)=\\
&|\tilde{\alpha}(t)|^2\mathbb{E}\left\{\mathrm{e}^{\mathrm{j}2\pi\fdl(t) \Delta t}\mathrm{e}^{-\mathrm{j}2\pi \Delta f\tau_{\mathrm{los}}(t) \xi_k(t)}\right\}\nonumber\,,
\end{IEEEeqnarray}
where $\tau_{\mathrm{los}}(t)$ is the piece-wise constant \ac{LOS} delay and the expectation is taken with respect to the joint distribution $p(t;\xi,\fd)$.
The result of the latter is the joint characteristic function $r(t;\Delta \tilde{f},\Delta t)$ as given in \eqref{eq:joint_char}.
Thus, the correlation function $R_L$ is proportional to the joint characteristic function $r(t;\Delta \tilde{f} ,\Delta t)$.
This relationship between correlation function and characteristic functions is similarly shown in \cite{Papoulis02} if complex exponentials are used for the channel representation.
Let us stress again that the proportionality is valid for an arbitrary, but fixed $t=t'$ and as $\Delta t\to 0$.
\end{proof}

\subsubsection{Doppler Correlated Functions}
The Doppler correlated functions in Fig.~\ref{fig:correlation_functions} constitute for $\Delta \fd=0$ a temporal average of the functions in the upper half.
The hybrid Doppler delay characteristic \ac{pdf} $\rho(\Delta\fd;\xi,\fd)$ for example is calculated as
\begin{equation}
\rho(\Delta\fd;\xi,\fd)\triangleq\int p(t;\xi,\fd)\mathrm{e}^{-\mathrm{j}2\pi\Delta\fd t}\,\mathrm{d}t\,,
\end{equation}
with $\rho(\Delta\fd=0;\xi,\fd)$ being the temporal mean of the joint delay Doppler probability density function due to the Fourier properties.
This was already implicitly used in describing \ac{V2V} scenarios such as two cars driving in opposite directions in \cite{Walter_PIMRC15}.
We normalize the functions in such a way, that the joint \ac{pdf} $p(t;\xi,\fd)$ is a time-variant probability density in the variables $\xi$ and $\fd$. 
Thus, the time-variant joint characteristic function $r(t;\Delta f=0,\Delta t=0)=1$.
The lower half functions with $\Delta \fd$ deviate from \acp{pdf} or characteristic functions by a factor of $T=1/\Delta \fd$.
The channel correlation function $\tilde{\mathcal{A}}_{\mathbf{H}}(\Delta t;\Delta f,\Delta \nu)$ by Matz thus corresponds to $\mathcal{R}(\Delta \fd;\Delta \tilde{f},\Delta t)$. 
We will focus our attention in the remaining paper on the time-variant functions since a time-variant joint \ac{pdf} as a basis of the description seems more natural with time-variant trajectories of \ac{TX} and \ac{RX} as input to our model.

Finally, as illustrated in Table~\ref{tab:variable_names}, we note that the probability based functions $r$ and $\rho$ correspond to the correlation functions $R$ and $P$ of Bello, respectively.
Further, the joint \ac{pdf} $p$ corresponds to the time-variant \ac {LSF} $\tilde{\mathcal{C}}_{\mathbf{H}}$ and Doppler correlated time frequency characteristic function $\mathcal{R}$ to the time-variant channel correlation function $\tilde{\mathcal{A}}_{\mathbf{H}}$.
For completeness, we further presented two new hybrid functions $\varrho$ with mixed variables in both time-variant and Doppler correlated domains.

\section{Derivation of Hybrid Characteristic PDF}\label{sec:HybricFunctions}
\noindent
In order to obtain an analytical closed-form solution for the hybrid characteristic \ac{pdf}, we have to transform the spatial coordinates into an adequate coordinate system.
We have shown in \cite{Walter_AWPL14}, \cite{Walter_TWC19}, \cite{Walter_TVT21} that a prolate spheroidal coordinate system is suitable for this purpose. 
The \ac{PSCS} allows for a delay-dependent description of the \ac{M2M} channel by exploiting the symmetry of the channel by an ellipsoid-based delay description.
We shortly summarize the corresponding formal steps from \cite{Walter_TVT21} and introduce the coordinate system.

\subsection{Prolate Spheroidal Coordinates}
\noindent
The transformation between the \ac{CCS} $(x,y,z)$ and the \acp{PSC} $(\xi,\eta,\vartheta)$ is given by the following equations
\begin{IEEEeqnarray}{rCl}
\label{eq:CoordinateTransform2}
x&=&l\sqrt{\left(\xi^2-1\right)\left(1-\eta^2\right)}\cos\vartheta\,,\\
y&=&l\sqrt{\left(\xi^2-1\right)\left(1-\eta^2\right)}\sin\vartheta\,,\nonumber\\
z&=&l\xi\eta\,,\nonumber
\end{IEEEeqnarray}
where $\xi\in[1,\infty)$, $\eta\in[-1,1]$, $\vartheta\in[0,2\pi)$ are the new coordinates and $l$ in \eqref{eq:CoordinateTransform2} is the focus distance of both \ac{TX} and \ac{RX} to the origin of the Cartesian and the prolate spheroidal coordinate system.
The coordinate $\xi$ represents the constant distance or delay, respectively, between \ac{TX} and \ac{RX} via a single-bounce reflection.
Geometrically, this relationship is represented by an ellipsoid.

Consider a scattering plane via which a signal propagates to the receiver.  
An arbitrarily oriented scattering plane is given in Cartesian coordinates as 
\begin{IEEEeqnarray}{rCl}
\label{eq:ground_plane}
Ax+By+Cz=lD\,,
\end{IEEEeqnarray}
where the four parameters $\{A, B, C, D\}\in\mathbb{R}$ determine its orientation in space.
For our purposes we express \eqref{eq:ground_plane} in \acp{PSC}, which results in
\begin{IEEEeqnarray}{rCl}
\label{eq:ground_plane_pscs}
Al\sqrt{(\xi^2-1)(1-\eta^2)}\cos\vartheta&+&Bl\sqrt{(\xi^2-1)(1-\eta^2)}\sin\vartheta\nonumber\\&+&Cl\xi\eta=lD\,.
\end{IEEEeqnarray}

The scattering plane, as any 2D plane embedded in 3D space, can be parameterized by two independent variables in the selected coordinate system.
Our goal is to obtain a parameterization that allows for a closed-form derivation of the hybrid time delay characteristic \ac{pdf}.
Since we want $\xi$ for a delay-dependent description, we can choose either $\eta$ or $\vartheta$ as the second variable.
In fact, we need both the $(\xi,\eta)$ and $(\xi,\vartheta)$ parameterizations to cover all possible scattering planes in 3D space.
Our main parameterization, however, is in $(\xi,\eta)$-coordinates. 
The remaining scattering planes, which cannot be parameterized by $(\xi,\eta)$ since they are orthogonal to the $z$-axis in the local \ac{CCS}, are described by the $(\xi,\vartheta)$-coordinates.
The $(\xi,\vartheta)$ description actually complements the $(\xi,\eta)$ parameterization.
In the following, we refer to these cases as \emph{general case} and \emph{complementary case}, respectively.

\subsection{Spatial Probability Density}
\noindent
In order to obtain the joint delay Doppler \ac{pdf} we restrict our analysis to scatterers lying on the scattering plane. 
We consider scatterers that lie on the portion of the scattering plane circumscribed by the intersection ellipse.
The resulting scattering ellipse can be generally described by the implicit expression $q(\xi,\eta,\vartheta)=0$, which simplifies to $q(\xi,\eta)=0$, if the parameterization is in $(\xi,\eta)$-coordinates or $q(\xi,\vartheta)=0$, if the parameterization is in $(\xi,\vartheta)$-coordinates.
We assume that the scatterers lying within $q(\xi,\eta,\vartheta)=0$ are identical and uniformly distributed.
Thus, the two-dimensional density $\mathbf{s}$ of the scatterers within the scattering ellipse is modeled as
\begin{equation}
\label{eq:Concept:ScattDensity}
p(t,q(\xi,\eta,\vartheta); \mathbf{s})=\frac{1}{\mathcal{Y}}\,,
\end{equation}
where $\mathcal{Y}$ is the equivalent area of the ellipse $q(\xi,\eta,\vartheta)=0$.

The joint delay Doppler \ac{pdf} is then obtained by transforming the distribution of scatterers $\mathbf{s}$ into $(\xi,\fd)$-coordinates using \eqref{eq:doppler2} and rules of probability transformation as
\begin{equation}
\label{eq:Concept:DelayDoppler}
p\left(t,q(\xi,\eta,\vartheta);\xi,\fd\right)=p\left(t,q(\xi,\eta,\vartheta); \mathbf{s}\right)\left|\mathbf{J_{s}}^{-1}\right|\,,
\end{equation}
where $\mathbf{J_{s}}^{-1}$ is the inverse $2\times2$ Jacobian matrix of the variable transformation.

For the joint delay Doppler \ac{pdf}, the transformation from the spatial domain to the Doppler domain, i.e., $s\mapsto\fd$ or $\mathbf{s}\mapsto(\xi,\fd)$ introduces ambiguities in the mapping.
These ambiguities, however, can be resolved by applying the algebraic curve theory to the Doppler frequency description, see also \cite{Walter_TWC19}.
Furthermore, the locations of the extrema and thus the limiting frequencies of the \acp{pdf} can be determined.

In order to use a spatial distribution of the scatterers, we need to calculate the area enclosed by the delay ellipsoid. 
Additionally, we need a weighting function $w(\xi,\eta)$ that takes into account the path loss, which follows from the radar equation \cite{Skolnik08}
\begin{IEEEeqnarray}{lCl}
\label{eq:DelayPDF}
w(\xi,\eta)=\frac{1}{\left(\xi^2-\eta^2\right)^2}\,.
\end{IEEEeqnarray}
This essentially states that the received power is proportional to the squared distances from the scatterer to \ac{TX} and \ac{RX} as $P\propto \left(\dtx^2\drx^2\right)^{-1}$.

For the \emph{general case}, we obtain the following equation for the weighted elliptic area
\begin{IEEEeqnarray}{lCl}
\label{eq:area_gen}
\mathcal{Y}_1=\\
\int\limits_{\xi_{\mathrm{min}}}^{\xi_{\mathrm{max}}}2\int\limits_{\eta_{1}\left(\xi\right)}^{\eta_{2}\left(\xi\right)}\hspace*{-0.1cm}\frac{w\left(\xi,\eta\right)l^2\sqrt{A^2+B^2+C^2}\left(\xi^2-\eta^2\right)\,\mathrm{d}\eta\mathrm{d}\xi}{\sqrt{\left(\xi^2-1\right)\left(1-\eta^2\right)\left(A^2+B^2\right)-\left(D-C \xi \eta\right)^2}}\,,\nonumber
\end{IEEEeqnarray}
where $\xi_{\mathrm{max}}>\xi_{\mathrm{min}}>\xi_{\mathrm{sr}}$ are the minimum and maximum normalized delay, which can be set by the user. 
The other parameters $\eta_2(\xi)>\eta_1(\xi)$ are given by \eqref{eq:eta12}. 

For the \emph{complementary case} we obtain a weighted circular area as a special case of the elliptic area, since the semi-major axis of the delay ellipsoid is orthogonal to the scattering plane.
It is given by
\begin{equation}
\label{eq:area_comp}
\mathcal{Y}_2=\hspace*{-0.3cm}\iint\limits_{q(\xi,\eta,\vartheta)=0}\hspace*{-0.3cm} w\left(\xi,\frac{D}{C\xi}\right)\,\mathrm{d}S_1
=\int\limits_{\xi_{\mathrm{min}}}^{\xi_{\mathrm{max}}}\int\limits_{0}^{2\pi}\frac{l^2\left(\xi-\frac{D^2}{C^2\xi^3}\right)}{\left(\xi^2-\left(\frac{D}{C\xi}\right)^2\right)^2}\,\mathrm{d}\vartheta\mathrm{d}\xi,
\end{equation}
where $\mathrm{d}S_1=\mathrm{d}\vartheta\mathrm{d}\xi$ 
is the differential scattering area  
with $\xi_{\mathrm{max}}>\xi_{\mathrm{min}}>\xi_{\mathrm{sr}}$ defined similarly to the \emph{general case}.

For deriving the hybrid time delay characteristic \ac{pdf}, we use the spatial density of the scatterers instead of transforming the Doppler frequency $\fd$, as was done in previous works, e.g., \cite{Walter_TVT21,Walter_TWC19,Walter_TVT17}.
We thus either transform over the variable $\eta$ for the \emph{general case} or over the variable $\vartheta$ for the \emph{complementary case}.

\begin{figure*}[!b]
\normalsize
\setcounter{MYtempeqncnt}{\value{equation}}
\setcounter{equation}{\value{MYtempeqncnt}+13}
\hrulefill
\begin{equation}
\label{eq:eta12}
\eta_{\mathrm{1},\mathrm{2}}(\xi)=\frac{ D C \xi \pm \sqrt{ D^2 C^2  \xi^2 -
    (A^2 \xi^2 +B^2 \xi^2 + C^2 \xi^2-A^2 - B^2) (A^2  +
     B^2+D^2 - A^2 \xi^2 - B^2 \xi^2)}}{ A^2 \xi^2 + B^2 \xi^2 + C^2 \xi^2-A^2 - B^2}
\end{equation}
\end{figure*}

\begin{figure*}[!b]
\normalsize
\setcounter{equation}{\value{MYtempeqncnt}+14}
\hrulefill
\begin{IEEEeqnarray}{lCl}
\label{eq:doppler_star}
f_{\mathrm{d},i}^{\star}(t;\xi,\eta)=\frac{1}{\left(A^2+B^2\right) \left(\xi^2-\eta^2\right)}\Bigg(\left(D-C\xi\eta\right)\left(A \left(\vrxx \left(\xi+\eta\right)+\vtxx \left(\xi-\eta\right)\right)+B\left(\vrxy \left(\xi+\eta\right)+\vtxy\left(\xi-\eta\right)\right)\right)\nonumber\\
\pm\sqrt{\left(\left(\xi^2-1\right) \left(1-\eta^2\right)\left(A^2+B^2\right) -\left(D-C\xi\eta\right)^2\right) \left(B \left(\vrxx \left(\xi+\eta\right)+\vtxx \left(\xi-\eta\right)\right)-A\left(\vrxy \left(\xi+\eta\right)+\vtxy \left(\xi-\eta\right)\right)\right)^2}\nonumber\\
+\left(A^2+B^2\right) \left(\vrxz\left(\xi\eta-1\right)\left(\xi+\eta\right)+\vtxz\left(\xi\eta+1\right) \left(\xi-\eta\right)\right)\Bigg)\frac{\fc}{c}
\end{IEEEeqnarray}
\setcounter{equation}{\value{MYtempeqncnt}}
\end{figure*}

Note that we do not provide the derivation with the simpler delay-dependent description based on the length of the intersection as in previous publications. 
Thus, we present the more realistic case of the area of the intersection ellipse, where the differential scatterers have a two-dimensional displacement. 

\subsection{Hybrid Time Delay Characteristic Probability Density}
\noindent
In this subsection, we derive the hybrid characteristic \ac{pdf} $\rho(t;\xi,\Delta t)$ in delay $\xi$ and time lag $\Delta t$ domains for general \ac{M2M} scattering channels as discussed in Section~\ref{sec:CH_Characterization}. 
Since our starting point is the joint delay Doppler \ac{pdf}, we obtain the hybrid characteristic probability density function by an inverse Fourier transform in the Doppler frequency variable. 
We show that in the limiting case, those newly derived hybrid characteristic \acp{pdf} converge to known results of correlation functions in the literature. 

We obtain the hybrid characteristic \acp{pdf} for the \emph{general case}, as it was defined above, by using the spatial variable $\eta$ instead of the Doppler frequency $\fd$.
By using relationship \eqref{eq:Concept:DelayDoppler}, we obtain with \eqref{eq:doppler_star}
\begin{IEEEeqnarray}{lCl}
\rho(t;\xi,\Delta t)=\sum_{i=1}^2\int p\left(t,q(\xi,\eta,\vartheta);\xi,\fd\right)\mathrm{e}^{\mathrm{j}2\pi \Delta t f_{\mathrm{d},i}^{\star}}\,\mathrm{d}f_{\mathrm{d},i}^{\star}\nonumber\\
=\sum_{i=1}^2\int p\left(t,q(\xi,\eta,\vartheta);\xi,\fd(\eta)\right)\mathrm{e}^{\mathrm{j}2\pi \Delta t f_{\mathrm{d},i}^{\star}(\eta)}\left|\mathbf{J_{s}}\right|\,\mathrm{d}\eta\,.
\end{IEEEeqnarray}
Thus, we can directly insert the weighted spatial scatterer density $p(t;\xi,\eta)$ and perform the inverse Fourier function over $\eta$ as
\begin{IEEEeqnarray}{lCl}
\label{eq:rho_general}
\rho(t;\xi,\Delta t)=\frac{1}{\mathcal{Y}_1}\sum_{i=1}^2\\
\int\limits_{\eta_{1}\left(\xi\right)}^{\eta_{2}\left(\xi\right)}\hspace*{-3pt}\frac{w\left(\xi,\eta\right)l^2\sqrt{A^2+B^2+C^2}\left(\xi^2-\eta^2\right)\mathrm{e}^{\mathrm{j}2\pi \Delta t f_{\mathrm{d},i}^{\star}(t;\xi,\eta)}}{\sqrt{\left(\xi^2-1\right)\left(1-\eta^2\right)\left(A^2+B^2\right)-\left(D-C \xi \eta\right)^2}}\mathrm{d}\eta\,,\nonumber
\end{IEEEeqnarray}
with the time-variant Doppler frequency $f_{\mathrm{d},i}^{\star}(t;\xi,\eta)$ according to \eqref{eq:doppler_star} and the integral limits $\eta_{1}(\xi)$ and $\eta_{2}(\xi)$ according to \eqref{eq:eta12} with $\eta_{2}(\xi)>\eta_{1}(\xi)$.

For the \emph{complementary case} we insert the weighted circular area given in \eqref{eq:area_comp} and can calculate the hybrid characteristic \ac{pdf} by an inverse Fourier transform of the Doppler variable $\fd$ as
\begin{IEEEeqnarray}{lCl}
\rho(t;\xi,\Delta t)=
\frac{1}{\mathcal{Y}_2}\int\limits_{0}^{2\pi}\frac{l^2\left(\xi-\frac{D^2}{C^2\xi^3}\right)}{\left(\xi^2-\left(\frac{D}{C\xi}\right)^2\right)^2}\mathrm{e}^{\mathrm{j}2\pi \fd\left(t;\xi,\vartheta\right)\Delta t}\,\mathrm{d}\vartheta\nonumber\\
=\frac{2l^2\left(\xi-\frac{D^2}{C^2\xi^3}\right)}{\mathcal{Y}_2\left(\xi^2-\left(\frac{D}{C\xi}\right)^2\right)^2}\int\limits_{-f_{\mathrm{l}}}^{f_{\mathrm{l}}}\frac{\mathrm{e}^{\mathrm{j}2\pi\fd \Delta t}}{f_{\mathrm{l}}\sqrt{1-\left(\frac{\fd-f_{\mathrm{o}}}{f_{\mathrm{l}}}\right)^2}}\,\mathrm{d}\fd\\
=\frac{1}{\mathcal{Y}_2}\frac{2\pi l^2\left(\xi-\frac{D^2}{C^2\xi^3}\right)}{\left(\xi^2-\left(\frac{D}{C\xi}\right)^2\right)^2}\mathrm{J}_0\left(2\pi f_{\mathrm{l}}(t;\xi)\Delta t\right)\mathrm{e}^{\mathrm{j}2\pi f_{\mathrm{o}}(t;\xi)\Delta t}\,\nonumber
\end{IEEEeqnarray}
where $\mathrm{J}_0$ is a zeroth-order Bessel function of the first kind, 
\begin{equation}
f_{\mathrm{o}}(t;\xi)= \frac{\fc}{c}\left(\frac{\textstyle{\frac{D}{C}}+1}{\xi+\textstyle\frac{D}{C \xi}}\vtxz+\frac{\textstyle{\frac{D}{C}}-1}{\xi-\textstyle\frac{D}{C \xi}}\vrxz\right)\,,
\end{equation}
is the offset frequency caused by the movement of \ac{TX} and \ac{RX} along the $z$-axis, and 
\begin{IEEEeqnarray}{rCl}
f_{\mathrm{l}}(&t&;\xi)=\frac{\fc}{c}\sqrt{\left(\xi^2-1\right)\left(1-\textstyle\left(\frac{D}{C \xi}\right)^2\right)}\times\\
&&\sqrt{\left(\frac{\vtxx}{\xi+\textstyle\frac{D}{C \xi}}+\frac{\vrxx}{\xi-\textstyle\frac{D}{C \xi}}\right)^2+\left(\frac{\vtxy}{\xi+\textstyle\frac{D}{C \xi}}+\frac{\vrxy}{\xi-\textstyle\frac{D}{C \xi}}\right)^2}\nonumber\,,
\end{IEEEeqnarray}
is the limiting frequency.
The basis for our calculations above is the Doppler frequency in  \cite[(3)]{Walter_TVT21} as 
\begin{IEEEeqnarray}{rCl}
\label{eq:doppler2}
&\fd &(t;\xi,\eta,\vartheta)=\frac{\fc}{c}\Bigg(\\
&&\frac{\xi\eta+1}{\xi+\eta}\vtxz+\frac{\sqrt{\left(\xi^2-1\right)\left(1-\eta^2\right)}}{\xi+\eta}\left(\vtxx\cos\vartheta+\vtxy\sin\vartheta\right)
\nonumber\\
&+&\frac{\xi\eta-1}{\xi-\eta}\vrxz+\frac{\sqrt{\left(\xi^2-1\right)\left(1-\eta^2\right)}}{\xi-\eta}\left(\vrxx\cos\vartheta+\vrxy\sin\vartheta\right)\Bigg)\nonumber
\end{IEEEeqnarray}
where $\vtxvec=\left[\vtxx,\vtxy,\vtxz\right]\tr$ and $\vrxvec=\left[\vrxx,\vrxy,\vrxz\right]\tr$ are the velocity vectors of \ac{TX} and \ac{RX} in the local \ac{CCS}, respectively. 
Since we obtain the relationship $\eta =D/(C\xi)$ for the \emph{complementary case}, the Doppler frequency reduces to $\fd\left(t;\xi,\vartheta\right)$.
Note that in the \emph{complementary case} the delay and Doppler \acp{pdf} factor, and thus are independent of each other.

\subsection{Limiting Value Consideration} 
\noindent
By studying the delay-dependent Doppler \ac{pdf} and the delay-dependent characteristic function in the asymptotic regime, as $\xi\to\infty$, we obtain several expressions that are well-known in the literature.
Specifically, we derive
\begin{IEEEeqnarray}{lll}
\label{eq:Bessel}
\lim_{\xi\to \infty}\rho(t;\Delta t|\xi)=\mathrm{J_0}\left(2\pi \Delta t\frac{\norm{\mathbf{v}_{\mathrm{t}\parallel\mathrm{E}}+\mathbf{v}_{\mathrm{r}\parallel\mathrm{E}}}}{c}\fc\right)\,,\\
\label{eq:Jakes}
\lim_{\xi\to \infty}p(t;\fd|\xi)=\frac{1}{\pi f_{\mathrm{l}\infty}(t)\sqrt{1-\left(\frac{\fd}{f_{\mathrm{l}\infty}(t)}\right)^2}}\,,
\end{IEEEeqnarray}
\begin{IEEEeqnarray}{lll}
\lim_{\xi\to \infty}\mu_{\fd|\xi}(t;\fd)=0\,,\\
\label{eq:Results_xiInf}
\lim_{\xi\to \infty}\sigma_{\fd|\xi}(t;\fd)=\frac{\norm{\mathbf{v}_{\mathrm{t}\parallel\mathrm{E}}+\mathbf{v}_{\mathrm{r}\parallel\mathrm{E}}}}{\sqrt{2}c}\fc=\frac{f_{\mathrm{l}\infty}(t)}{\sqrt{2}}\,,
\end{IEEEeqnarray}
with parallel velocity vectors and limiting Doppler frequency given by
\setcounter{equation}{41}
\begin{IEEEeqnarray}{lll}
\label{eq:vtx_infinity}
\mathbf{v}_{\mathrm{t}\parallel\mathrm{E}}=\frac{\mathbf{n}_{\mathrm{E}}\times\left(\vtxvec\times\mathbf{n}_{\mathrm{E}}\right)}{\norm{\mathbf{n}_{\mathrm{E}}}^2}=\vtxvec-\frac{\left(\vtxvec\cdot\mathbf{n}_{\mathrm{E}}\right)\mathbf{n}_{\mathrm{E}}}{\norm{\mathbf{n}_{\mathrm{E}}}^2}\,,\\
\label{eq:vrx_infinity}
\mathbf{v}_{\mathrm{r}\parallel\mathrm{E}}=\frac{\mathbf{n}_{\mathrm{E}}\times\left(\vrxvec\times\mathbf{n}_{\mathrm{E}}\right)}{\norm{\mathbf{n}_{\mathrm{E}}}^2}=\vrxvec-\frac{\left(\vrxvec\cdot\mathbf{n}_{\mathrm{E}}\right)\mathbf{n}_{\mathrm{E}}}{\norm{\mathbf{n}_{\mathrm{E}}}^2}\,,\\
\label{eq:fd_infinity}
\lim_{\xi\to \infty}f_{\mathrm{l}\infty}(t)=\frac{\norm{\mathbf{v}_{\mathrm{t}\parallel\mathrm{E}}+\mathbf{v}_{\mathrm{r}\parallel\mathrm{E}}}}{c}\fc\,.
\end{IEEEeqnarray}
The velocity vectors $\mathbf{v}_{\mathrm{t}\parallel\mathrm{E}}$ in \eqref{eq:vtx_infinity} and $\mathbf{v}_{\mathrm{r}\parallel\mathrm{E}}$ in \eqref{eq:vrx_infinity} of \ac{TX} and \ac{RX} are parallel to the scattering plane.
The limiting frequencies $f_{\mathrm{l}\infty}(t)$ in \eqref{eq:fd_infinity} for $\xi\to\infty$ are given by the solution of the $\eta$ variable $\pm\left(v_{\mathrm{t}z\parallel\mathrm{E}}+v_{\mathrm{r}z\parallel\mathrm{E}}\right)/\left(\norm{\mathbf{v}_{\mathrm{t}\parallel\mathrm{E}}+\mathbf{v}_{\mathrm{r}\parallel\mathrm{E}}}\right)$ of the polynomial in \cite[eq. (31)]{Walter_TVT21}. 
The result in \eqref{eq:Jakes} matches the classical Jakes result.
The width of the spectrum, however, is determined by the velocity vector components of \ac{TX} and \ac{RX}, which are parallel to the scattering plane.
The reason for this is that for large $\xi$ the eccentricity of the ellipsoid reduces toward $0$, thus, approaching a sphere.
The intersection with the scattering plane thus results in a scattering circle on which the scatterers are uniformly distributed.
The corresponding Fourier transform of the delay-dependent \ac{pdf} in \eqref{eq:Jakes} results in the typical Bessel function in \eqref{eq:Bessel} as the delay-dependent characteristic function $\rho(t;\Delta t|\xi)$.

\section{Comparative Analysis of Theory and Measurement}
\label{sec:results}
\noindent
In our previous works \cite{Walter_TVT17} and \cite{Walter_TWC19}, we primarily examined the scattering contributions in terms of delay and Doppler frequency shift.
However, the decrease in the scattering power with increasing delay, reflecting the influence of the channel's \ac{PDP}, has not yet been addressed and verified.
The typical approach to account for the \ac{PDP} is to empirically adjust the scattering power behavior using a specific path loss exponent, e.g., as in \cite{Walter_PIMRC15}.
In contrast to these empirical approaches, the analytical description of the hybrid characteristic \ac{pdf} of the channel, as presented in this work, allows the calculation of the time-variant \ac{PDP} for any scenario, taking into account the geometry of the environment and the velocities of the transceivers.

In the following, we examine the \ac{A2A} channel as a representative example of a \ac{US} \ac{M2M} channel, being the most general channel where both \ac{TX} and \ac{RX} are not confined to the scattering plane.
We compare data obtained from a measurement campaign \cite{WalterEuCAP10} and the data obtained from the numerical evaluation of the hybrid characteristic \ac{pdf} from \eqref{eq:rho_general} from Section~\ref{sec:HybricFunctions}.

\subsection{Measurement and Simulation Scenario}
\begin{figure}[tb]
    \centering
    \input{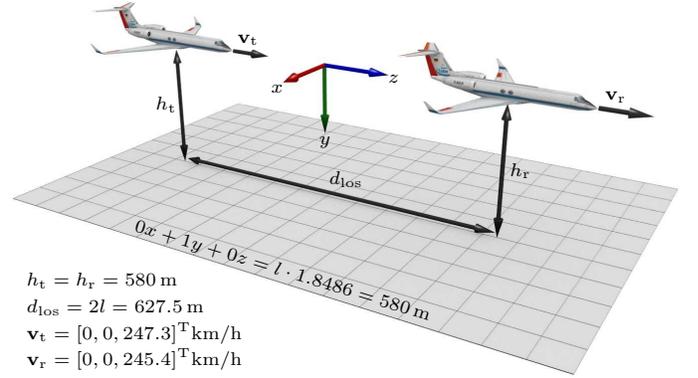}
	\caption{Aircraft positions, velocity vectors, distance and placement of the scattering plane in a local coordinate system for simulation.}
 \label{fig:scenario}
\end{figure}

\begin{table}[htb]
	\begin{center}
		\caption{Measurement and evaluation parameters}
		\label{tab:chan_param}
            \renewcommand{\arraystretch}{1.3}
		\begin{tabular}{|l|c|c|} \hline
			 \rowcolor{lightgray}\textbf{Parameter Name} & \textbf{Variable} & \textbf{Value}\\
            \hline
		Carrier frequency & $\fc$ & \SI{250}{\mega\hertz} \\  \hline
		Bandwidth & $B$ & \SI{20}{\mega\hertz} \\ \hline
            Frequency resolution & $\Delta f$ & \SI{39.1}{\kilo\hertz} \\ \hline
            Normalized maximum frequency& $\tilde{f}_{\mathrm{max}}$ & $\pm$ 20.92 \\ \hline
            Normalized frequency resolution& $\Delta \tilde{f}$ & 0.082 \\ \hline
            Signal period& $\tau_{\mathrm{max}}$ & \SI{25.6}{\micro\second} \\ \hline
  		Delay resolution & $\Delta \tau$ & \SI{50}{\nano\second} \\ \hline
  		Normalized signal period & $\xi_{\mathrm{max}}$ & 12.24 \\ \hline
  		Normalized delay resolution & $\Delta \xi$ & 0.024 \\ \hline
            Time period &  $t_{\mathrm{max}}$ & \SI{2.1}{\second} \\\hline
		Measurement time grid & $\Delta t$ & \SI{2.048}{\milli\second} \\ \hline
		Max. Doppler frequency & $f_{\mathrm{d,max}}$ & $\pm$\SI{244}{\hertz} \\ \hline
           Doppler resolution &  $\Delta\fd$ & \SI{0.5}{\hertz} \\ \hline 
		\end{tabular}
	\end{center}
\end{table}

\noindent
We consider a scenario in which two aircraft fly at the same altitude above ground and are positioned $d_{\mathrm{los}}=\tau_{\mathrm{los}}c$ behind each other.
An overview of the scenario including positions of the aircraft, velocity vectors $\mathbf{v}_{\mathrm{t}}$ and $\mathbf{v}_{\mathrm{r}}$ in a local \ac{CCS}, and the parameters of the scattering plane for the simulation is shown in Fig.~\ref{fig:scenario}.

The measurement parameters that are given by the channel sounding equipment are provided in Tab.~\ref{tab:chan_param}. 
The normalized parameters $\xi$ and $\tilde{f}$ can be derived from the physical parameters and the line-of-sight delay $\tau_{\mathrm{los}}$.

In order to show the advantages of the probability based channel description functions with respect to real world measurements and the corresponding correlation functions, we first compare normalized versions of both probability based and correlation based functions. 
In a second step, we use the non-normalized probabilistic 2D functions and marginalize them to obtain the delay and Doppler spectra and the correlation in the time and frequency domain.
We discuss the 2D functions in the same order as in Section~\ref{sec:Prob_based}.

\subsection{Normalized Probability and Correlation Based Functions}
\noindent
For continuity with our previous paper \cite{Walter_TVT21}, we begin by examining the factorized \ac{pdf} $p(t;\fd|\xi)=p(t;\xi,\fd)/p(t;\xi)$ in Fig.~\ref{fig:LSF_theo} and compare it with the real part of the time-variant, delay-dependent \ac{LSF} $\tilde{\mathcal{C}}_{\mathbf{H}}(t;\fd|\xi)$ in Fig.~\ref{fig:LSF}.
Additionally, we illustrate the delay-dependent mean Doppler $\mu_{\fd|\xi}(t)$ from \eqref{eq:mu_fd} and Doppler spread $\sigma_{\fd|\xi}(t)$ from \eqref{eq:sigma_fd} in Fig.~\ref{fig:LSF_theo}.
Since the spectra are symmetric, the mean Doppler stays zero, but the Doppler spread is increasing with delay $\xi$ and approaches, according to \eqref{eq:Results_xiInf}, the value $\sigma_{\fd|\xi\to\infty}(t)=\SI{80.65}{\hertz}$.
In the limiting case, the delay-dependent spectrum conforms to a Jakes spectrum according to \eqref{eq:Jakes}, consistent with the literature.
The analysis of the theoretical results in Fig.~\ref{fig:LSF_theo} reveals that the shape and the values of both the probability based and the correlation based functions are the same. 
The scattering power in the measurement data is very weak and close to the noise threshold.
Furthermore, the scattering does not occur uniformly on the ground as in our assumption.
Thus, we can observe gaps in Fig.~\ref{fig:LSF}.
This will lead to slight differences in the marginalized one-dimensional functions. 
The structure of the measured channel, however, is well captured with the delay-dependent Doppler pdf $p(t;\fd|\xi)$. 

\begin{figure}[t]
	\centering  \includegraphics[width=0.9\columnwidth]{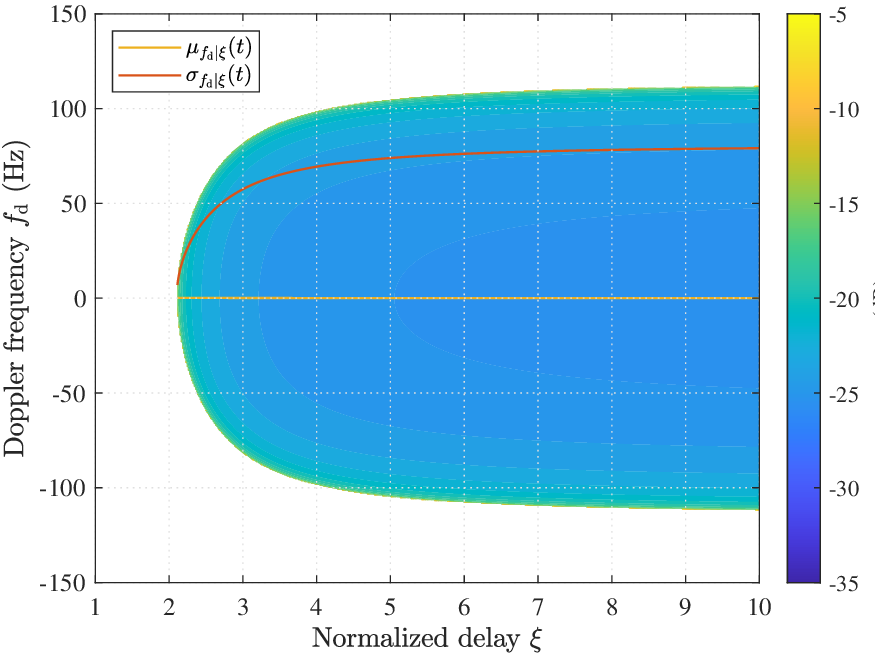}
	\caption{Theoretical time-variant, delay-dependent Doppler pdf $p(t;\fd|\xi)$ with mean Doppler $\mu_{\fd|\xi}(t)$ and Doppler spread $\sigma_{\fd|\xi}(t)$.}
	\label{fig:LSF_theo}
\end{figure}

\begin{figure}[t]
	\centering  \includegraphics[width=0.9\columnwidth]{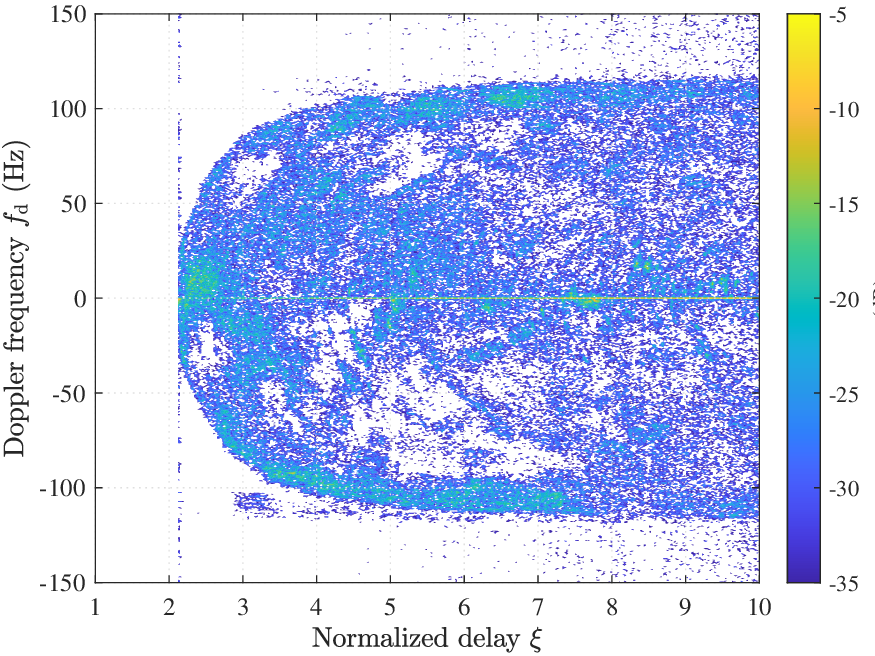}
	\caption{Measured time-variant, delay-dependent local scattering function $\Re\left\{\tilde{\mathcal{C}}_{\mathbf{H}}(t;\fd|\xi)\right\}$.}
	\label{fig:LSF}
\end{figure}

We continue our comparison with the real part of the delay-dependent hybrid time delay characteristic \ac{pdf} $\rho(t;\Delta t|\xi)$ in Fig.~\ref{fig:hybrid_char_pdf}. 
The real part of the normalized temporally correlated delay-dependent delay cross power density $P_h(t;\Delta t|\xi)$ is given in Fig.~\ref{fig:cross_power}.
Both theoretical results and measurement data demonstrate a strong agreement in the temporal correlation of the channel.
The correlation decreases noticeably with increasing delay.
For large delays, $\xi\to\infty$, the delay-dependent characteristic function converges to a Bessel function as described in \eqref{eq:Bessel} aligning well with theoretical expectations.
Since both the delay-dependent Doppler \ac{pdf} and delay-dependent, time-variant \ac{LSF} get wider with increasing delay, the correlation in the $\Delta t$ variable naturally decreases, as can be seen in Figs. \ref{fig:LSF_theo} and \ref{fig:LSF}.
The influence of the \ac{LOS} signal and the \ac{SR} reflection, which was eliminated from the measurement data for comparison reasons, is still slightly observable for delays $\xi$ close to $\xi_{\mathrm{sr}}=2.1$ according to \eqref{eq:xi_sr2}.

\begin{figure}[t]
	\centering    \includegraphics[width=0.9\columnwidth]{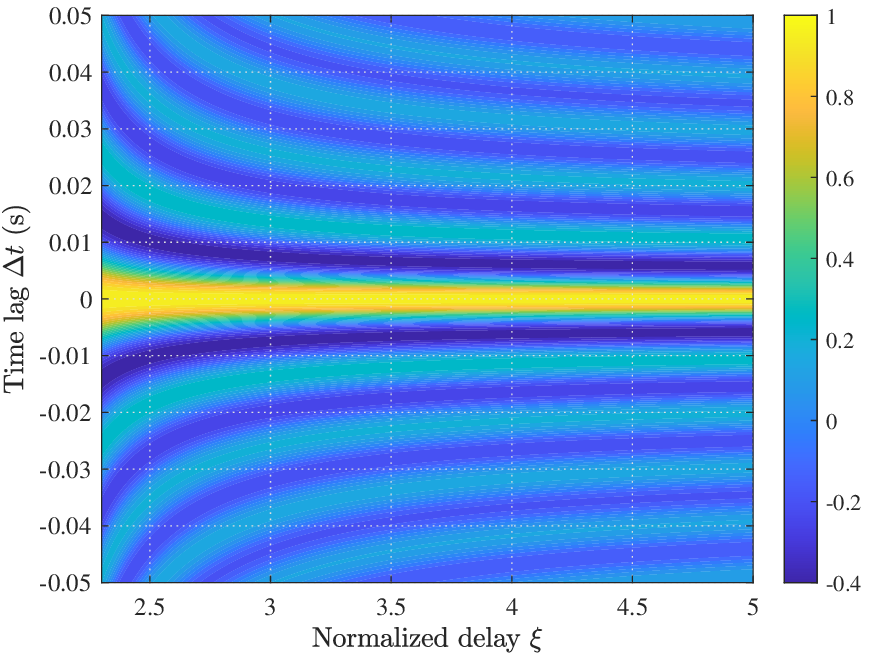}
	\caption{Theoretical time-variant, delay-dependent temporal characteristic function $\Re\left\{\rho(t;\Delta t|\xi)\right\}$.}	\label{fig:hybrid_char_pdf}
\end{figure}

\begin{figure}[t]
	\centering    \includegraphics[width=0.9\columnwidth]{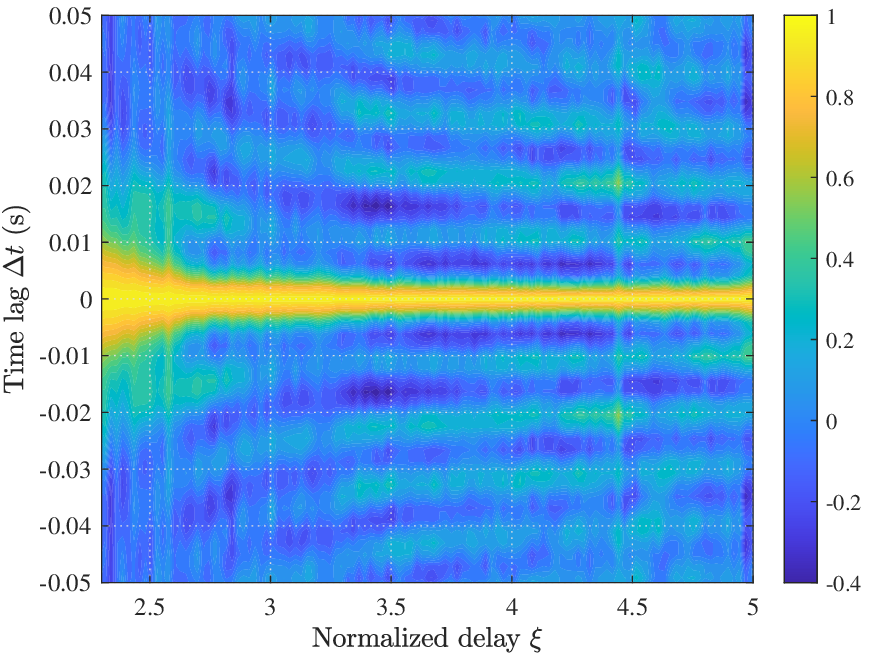}
	\caption{Measured time-variant, delay-dependent temporal correlation function $\Re\left\{P_h(t;\Delta t|\xi)\right\}$.}
	\label{fig:cross_power}
\end{figure}

Next, we compare the real parts of the newly introduced time-variant hybrid Doppler-dependent frequency characteristic \ac{pdf} $\varrho(t; \Delta \tilde{f}|\fd)$ with $R_{\varrho}(t; \Delta \tilde{f}|\fd)$.
The time-variant hybrid frequency Doppler characteristic \ac{pdf} $\varrho(t; \Delta \tilde{f},\fd)$ is thus divided by its Doppler spectral density $p(t;\fd)$.
Therefore the correlation in normalized frequency lag $\Delta\tilde{f}$ becomes visible.
The theoretical results in Fig.~\ref{fig:dop_freq_corr_theo} show that the correlation is largest for the Doppler frequency $\fd=\SI{0}{\hertz}$.
Both with increasing and decreasing Doppler frequency, the correlation symmetrically diminishes along the frequency axis.
The correlation of the measurement data in Fig.~\ref{fig:dop_freq_corr} shows a similar behavior.  

\begin{figure}[t]
	\centering    \includegraphics[width=0.9\columnwidth]{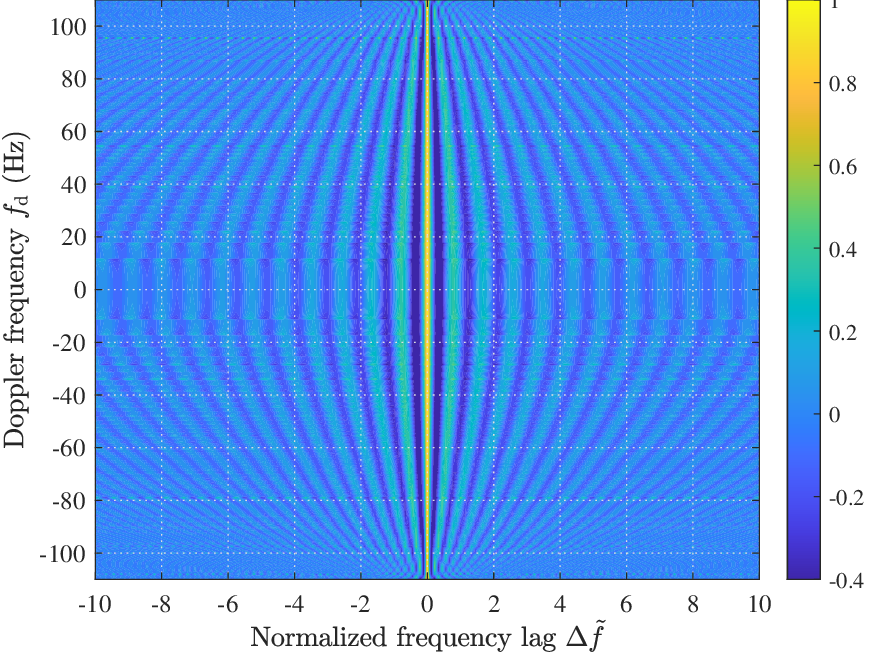}
	\caption{Theoretical time-variant Doppler-dependent frequency characteristic function $\Re\left\{\varrho(t; \Delta \tilde{f}|\fd)\right\}$.}
	\label{fig:dop_freq_corr_theo}
\end{figure}

\begin{figure}[t]
	\centering  \includegraphics[width=0.9\columnwidth]{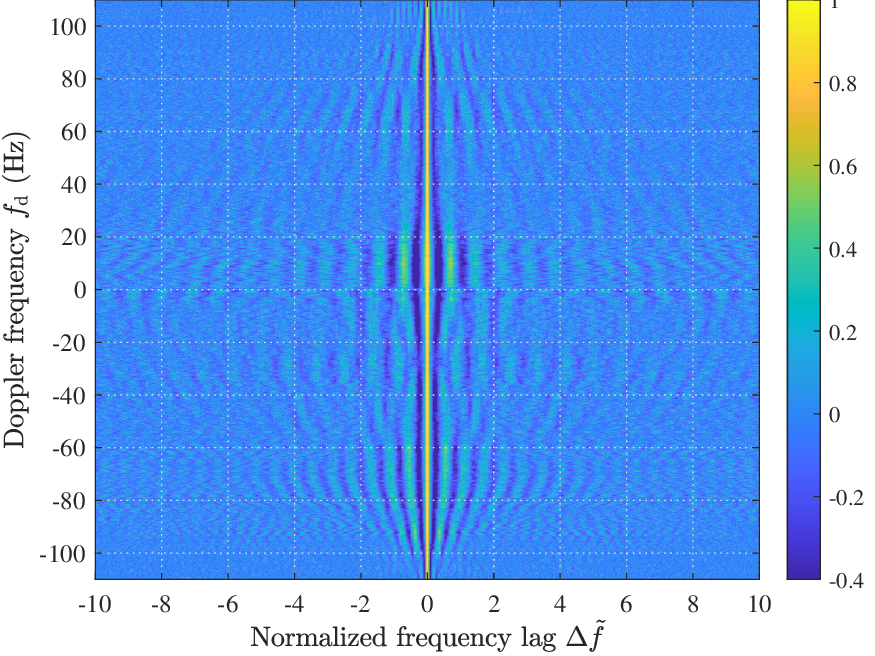}
	\caption{Measured time-variant Doppler-dependent frequency correlation function $\Re\left\{R_{\varrho}(t; \Delta \tilde{f}|\fd)\right\}$.}
	\label{fig:dop_freq_corr}
\end{figure}

Finally, Fourier transforms of both hybrid characteristic \acp{pdf} lead to the joint time-variant characteristic function.
The real part of the time-variant joint time frequency characteristic function $r(t;\Delta \tilde{f},\Delta t)$ and the real part of the time frequency correlation function $\tilde{R}_L(t; \Delta \tilde{f},\Delta t)$ are shown in Fig.~\ref{fig:time_freq_corr_theo} and Fig.~\ref{fig:time_freq_corr}.
They both have a peak at zero time and zero frequency shift.
Along the time and frequency axes, both functions further exhibit the typical decreasing correlation behavior.

\begin{figure}[t]
	\centering    \includegraphics[width=0.9\columnwidth]{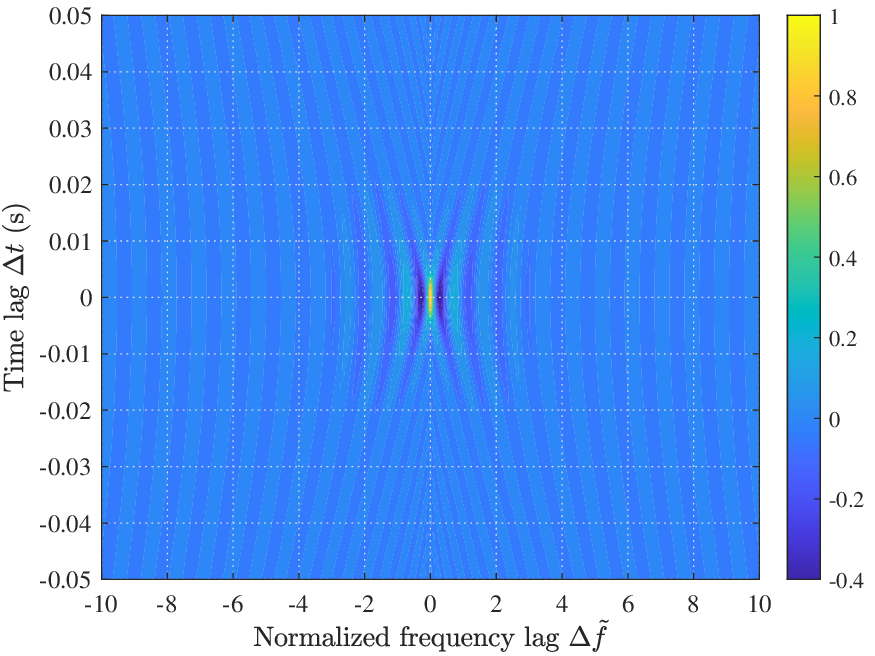}
	\caption{Theoretical time-variant, joint time frequency characteristic function $\Re\left\{r(t; \Delta \tilde{f},\Delta t)\right\}$.}
	\label{fig:time_freq_corr_theo}
\end{figure}

\begin{figure}[t]
	\centering  \includegraphics[width=0.9\columnwidth]{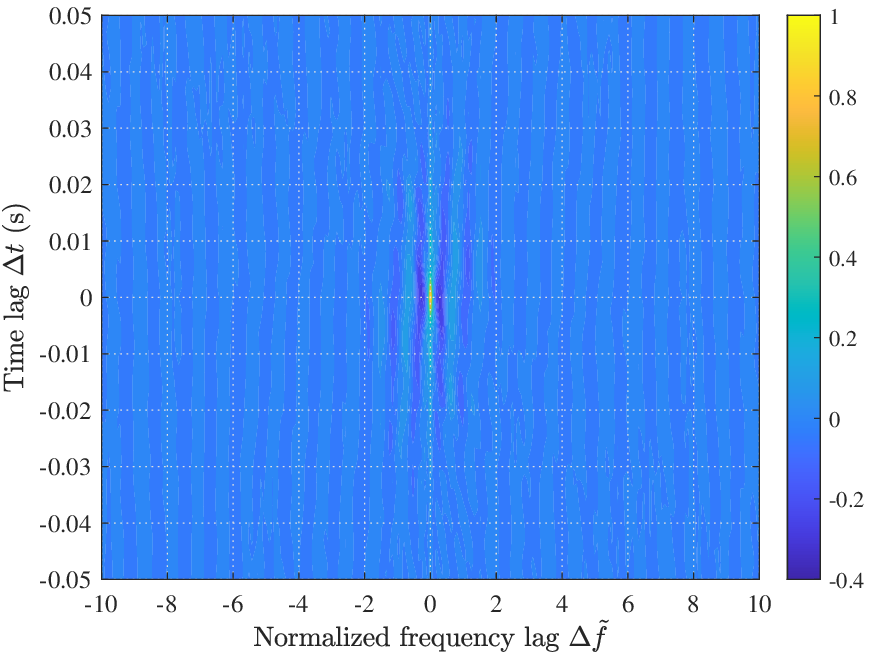}
	\caption{Measured time-variant, joint time frequency correlation function $\Re\left\{\tilde{R}_L(t; \Delta \tilde{f},\Delta t)\right\}$.}
	\label{fig:time_freq_corr}
\end{figure}

\subsection{Time-Variant Probability Based Functions}
\noindent
In this subsection, we examine the full time-variant 2D probabilistic description of the channel.
In reference to Fig.~\ref{fig:correlation_functions}, the block diagram in Fig.~\ref{fig:SampledEvaluations} shows the relationship between the different channel descriptions and their corresponding mutual transformations.

We begin with the joint delay Doppler \ac{pdf} $p(t;\xi,\fd)$ in Fig.~\ref{fig:SampledEvaluations}(I). 
Note the similarity to the conditional \ac{pdf} in Fig.~\ref{fig:LSF_theo}.
Yet, the key difference is the obvious drop of the probability mass, or equivalently signal power, with increasing delay. 
This drop of probability with increasing delay, as captured by the marginal $p(t; \xi)$ in Fig.~\ref{fig:SampledEvaluations}(i), is explicitly transferred to the hybrid time delay characteristic \ac{pdf} $\rho(t;\xi, \Delta t)$, see Fig.~\ref{fig:SampledEvaluations}(II).
Notably, the joint \ac{pdf} descriptions correctly account for the weighting of the functions in the delay domain.
A comparable weighting occurs in the Doppler domain with the Doppler \ac{pdf} $p(t; \fd)$ affecting $\varrho(t;\Delta \tilde{f}, \fd)$ as shown in Fig.~\ref{fig:SampledEvaluations}(III).
Naturally, the joint characteristic function $r(t;\Delta\tilde{f},\Delta t)$ in Fig.~\ref{fig:SampledEvaluations}(IV) accounts for both of these weightings implicitly through the Fourier transform. 

The clear advantage of these joint descriptions is their computability from environmental models and location information of transceivers, as demonstrated in \cite{Bellido_TVT22}.
Moreover, they enable the calculation of four time-varying marginalized descriptions: the delay \ac{pdf} $p(t;\xi)$, the Doppler \ac{pdf} $p(t;\fd)$, as well as the temporal characteristic function $r(t;\Delta t)$ and the frequency characteristic function $r(t;\Delta \tilde{f})$ following the properties of \acp{pdf} or corresponding characteristic functions.
In Fig.~\ref{fig:SampledEvaluations}, the respective relationships of these marginalized descriptions are depicted by arrows.

The time-variant probability densities $p(t;\xi)$ and $p(t;\fd)$ are computed by integrating the joint delay Doppler \ac{pdf} $p(t;\xi,\fd)$ of Fig.~\ref{fig:SampledEvaluations}(I), where the integration variable is $\fd$ for $p(t;\xi)$ and $\xi$ for $p(t;\fd)$. 
Alternatively, the same result can be obtained by setting the Delta variables $\Delta t=0$ in $\rho(t;\xi,\Delta t)$, Fig.~\ref{fig:SampledEvaluations}(II) and $\Delta \tilde{f}=0$ in $\varrho(t;\Delta\tilde{f},\fd)$, Fig.~\ref{fig:SampledEvaluations}(III), respectively.
This is a general property of a characteristic function.  
As illustrated in Figs.~\ref{fig:SampledEvaluations}(i)-(ii), the analytically computed \acp{pdf} $p(t;\xi)$ and $p(t;\fd)$ align remarkably well with those derived from measurement data.
Note again, as mentioned above, that in order to reveal the scattering in the channel, the signal components from \ac{LOS} and \ac{SR} are eliminated.
Therefore, the decreasing behavior of the scatter channel caused by path loss can be clearly observed in Fig.~\ref{fig:SampledEvaluations}(i) and is well reflected by the delay \ac{pdf}. 
Regarding the Doppler \ac{pdf} shown in Fig.~\ref{fig:SampledEvaluations}(ii), it is evident that the shape deviates from the traditional Jakes spectrum.  
While the theoretical model shows a concave shape, the measurements exhibit a higher probability at zero Doppler due to the imperfect elimination of \ac{LOS} and \ac{SR}, along with a slight increase at the limiting Doppler frequencies.

Finally, let us analyze the two time-variant characteristic functions $r(t; \Delta\tilde{f})$ and $r(t;\Delta t)$.
The former, shown in Fig.~\ref{fig:SampledEvaluations}(iii), can be obtained via marginalization of the hybrid frequency Doppler characteristic \ac{pdf} $\varrho(t;\Delta \tilde{f},\fd)$ or by setting $\Delta t=0$ in $r(t;\Delta \tilde{f},\Delta t)$.
The temporal characteristic function, shown in Fig.~\ref{fig:SampledEvaluations}(iv), $r(t;\Delta t)$ can be similarly computed from $\rho(t;\xi,\Delta t)$ or $r(t;\Delta \tilde{f},\Delta t)$.
We observe that the zero crossings and sidelobes of the theoretical curves closely match those computed from measurement data both in Fig.~\ref{fig:SampledEvaluations}(iii) and Fig.~\ref{fig:SampledEvaluations}(iv).
Further, we can determine the coherence bandwidth as a solution to $\Re\left\{r(t; \Delta \tilde{f}=0,\Delta t)\right\}=1/2$.
The normalized coherence bandwidth is about $B_{\mathrm{C}}=0.126$, which corresponds to a physical bandwidth of $B_{\mathrm{C'}}=\SI{60.239}{\kilo\hertz}$.
Note that in Fig.~\ref{fig:SampledEvaluations}(iv) the empirical evaluations show a slight elevation of the sidelobes.
This discrepancy can again be attributed to the imperfect elimination of the \ac{LOS} and \ac{SR} components.
Equivalently to the coherence bandwidth, we derive the channel coherence time as a solution to $\Re\left\{r(t; \Delta \tilde{f},\Delta t=0)\right\}=1/2$, resulting in $T_{\mathrm{C}}=\SI{6.4}{\milli\second}$.

\begin{figure*}    
    \resizebox{\linewidth}{!}{\input{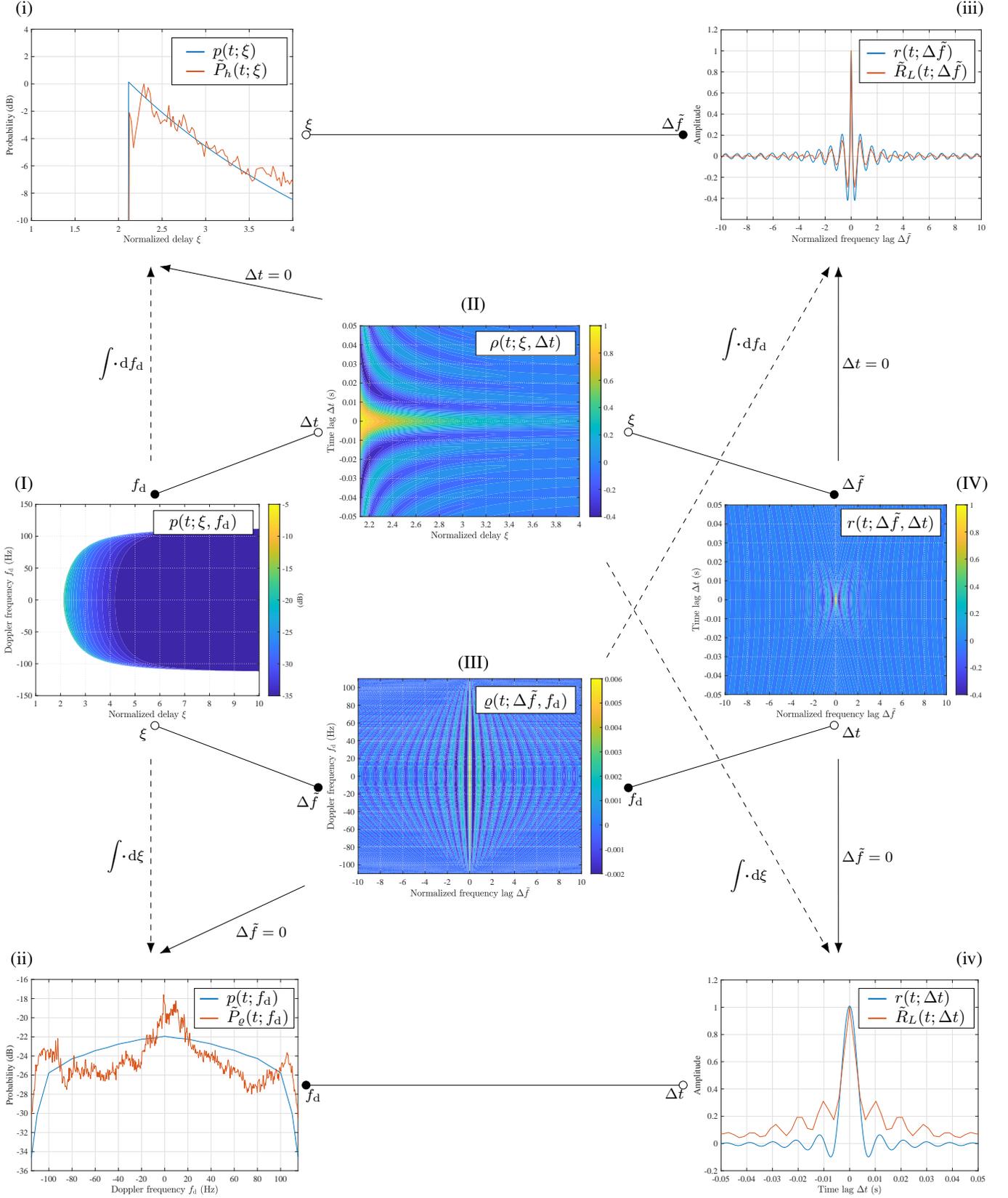}}
    \caption{Time-variant probability based functions from Fig.~\ref{fig:correlation_functions} for an aircraft-to-aircraft scenario with plane parameters $A=0$, $B=1$, $C=0$, $D=1.8486$, and $l=\SI{313.75}{\meter}$, with velocity vectors of the transmitter $\mathbf{v}_{\mathrm{t}}={[0,0,247.3]}\tr\si{\kilo\meter/\hour}$ and the receiver $\mathbf{v}_{\mathrm{r}}={[0,0,245.4]}\tr\si{\kilo\meter/\hour}$ according to Fig.~\ref{fig:scenario}.}
    \label{fig:SampledEvaluations}
\end{figure*}

\section{Conclusion}
\noindent
We have presented a complete analytic probability based description of the mobile-to-mobile uncorrelated scatter channel.
The probability based description is proportional to the correlation based description introduced by Bello for wide-sense stationary, uncorrelated scattering channels and by Matz for the general case.

A new set of functions, which we term the hybrid characteristic probability density function is introduced and derived. 
These functions have hybrid properties in the sense that in one variable they behave like a probability density function, while in the other they act as a characteristic function.
The hybrid time delay characteristic probability density functions are shown to be proportional to the temporally correlated delay cross-power spectral density as introduced by Bello.
The complete two-dimensional description allows for the scattering-based path loss to be naturally included in the model. 

The verification of the probability based description is done by using the measurement data from an air-to-air measurement campaign.
This scenario is the most general mobile-to-mobile channel, since the transmitter and receiver are arbitrarily located in 3D space. 
Through appropriate normalization of both probability based and correlation based functions, the two approaches can be directly compared without determining the proportionality constant.
The comparison shows that the proposed new probabilistic description closely aligns with the measurements.

\section*{Acknowledgment}
\noindent
Special thanks go to Rohde $\&$ Schwarz GmbH $\&$ Co. KG. for providing the air-to-air measurement data.

\bibliographystyle{IEEEtran}
\bibliography{IEEEabrv,literature}

\end{document}